\newtheorem{proposition}{Proposition}[section]
\renewcommand{\Pr}{\mathsf{P}}
\newcommand{\p}{\mathsf{P}}
\newcommand{\M}{\mathsf{M}}
\newcommand{\bern}{\mathsf{Bernoulli}}
\renewcommand{\Pr}{\mathsf{P}}
\DeclareMathOperator*{\di}{\mathrm{d}\!}
\begin{document}

\title{Modeling association in microbial communities with clique loglinear models}

\author{Adrian Dobra\textsuperscript{1}}
\address{Department of Statistics, Department of Biobehavioral Nursing and Health Informatics, Center for Statistics and the Social Sciences and Center for Studies in Demography and Ecology, University of Washington, Box 354322, Seattle, WA 98195}
\email{adobra@uw.edu}

\author{Camilo Valdes\textsuperscript{1}}
\address{School of Computing and Information Sciences, Florida International University, 11200 SW 8th Street, ECS 354, Miami, FL 33199}
\email{hobbes182@gmail.com}

\author{Dragana Ajdic}
\address{Department of Dermatology and Cutaneous Surgery, and Department of Microbiology and Immunology, Miller School of Medicine, University of Miami, RMSB Room 2089, 1600 NW 10th Ave., Miami, FL 33136}
\email{d.ajdic@med.miami.edu}

\author{Bertrand Clarke}
\address{Department of Statistics, University of Nebraska-Lincoln, 340 Hardin Hall North Wing, Lincoln, NE 68583}
\email{bclarke3@unl.edu}

\author{Jennifer Clarke}
\address{Department of Statistics, and Department of Food Science and Technology, University of Nebraska-Lincoln, 340 Hardin Hall North Wing, Lincoln, NE 68583}
\email{jclarke3@unl.edu}

\footnote{A. Dobra and C. Valdes contributed equally to this work, and are joint first authors. Address correspondence to: \texttt{adobra@uw.edu}.}

\date{\today}                                           

\begin{abstract}
There is a growing awareness of the important roles that microbial communities play in complex biological processes. Modern investigation of these often uses next generation sequencing of metagenomic samples to determine community composition. We propose a statistical technique based on clique loglinear models and Bayes model averaging to identify microbial components in a metagenomic sample at various taxonomic levels that have significant associations. We describe the model class, a stochastic search technique for model selection, and the calculation of estimates of  posterior probabilities of interest. We demonstrate our approach using data from the Human Microbiome Project and from a study of the skin microbiome in chronic wound healing. Our technique also identifies significant dependencies among microbial components as evidence of possible microbial syntrophy. \\
KEYWORDS: contingency tables, graphical models, model selection, microbiome, next generation sequencing
\end{abstract}

\maketitle

\tableofcontents

\section{Introduction}
\label{sec:intro}

Microbiomes  -- the communities of micro-organisms peculiar to specific environments such as mammalian skin or managed agricultural soil -- play key roles in a diverse set of biological phenomena, from plant growth to wine cultivation to human health and disease.  Metagenomics is the study of genetic material recovered
directly from a specific microbiome or environment without knowledge of the composition of the sample.
Thus, metagenomic-based studies generate valuable information about the composition of microbiomes
and differences in their composition that may be related to environmental differneces.  Traditionally,
 studying complex microbiome samples relied on intensive microbiological techniques involving the isolation and culturing of individual organisms followed by phenotypic or genotypic analysis. These techniques precluded microbial community profiling within a single sample. However, recent advances in high-throughput DNA sequencing technologies now permit whole-genome metagenomic sequencing (i.e., whole metagenome sequencing) without such isolation or culturing.  This means that characterization of complex microbial communities is now possible.

Whole metagenome sequencing has served as the primary tool for several high profile, collaborative research endeavors such as the U.S. National Institute of Health Human Microbiome Project \cite{nih-et-2009}, the U.S. Department of Energy Joint Genome Institute's Integrated Microbial Genomes (IMG) system \cite{markowitz-et-2014}, and the Canadian Institutes of Health Research Canadian Microbiome Initiative. Often, metagenome sequencing means that next generation sequencing (NGS) techniques are used.  These techniques differ from classical Sanger sequencing in that instead of sequencing a whole DNA molecule nucleotide by nucleotide, the sequencing is done in parallel at many points of the DNA molecule resulting in short reads, or simply reads, typically  ranging in length from 50 to 250 nucleotides.  Usually, a key step in the analysis of NGS data is aligning the reads to a collection of consensus sequences or reference genomes for a collection of organisms.  Whole metagenome sequencing is the general case for which our formal reasoning
is designed: our examples use whole genome and sequencing and 15S sequencing. The differences are addressed in Appendix, Section \ref{sec:sequencing}.

It is well known that compositional studies of microbiomes alone provide no information about potential symbiosis, or syntrophy --  settings in which the metabolic waste products from one microbe provide nutrients for another -- among species or strains \cite{levy-borenstein-2013}. Indeed, microbial communities in diverse settings have been shown to form syntrophic relationships. Such relationships have been posited to drive pathogenicity \cite{kievit-idlewski-2000,koch-et-2014}. A simple approach to infer possible syntrophic relationships is to examine rates of co-occurence of micro-organisms in the same habitat across samples \cite{hoffmann-et-2013}. However, these methods cannot be used with a single sample.  They rely on co-occurence across many samples. In addition, in most metagenomic studies based on sequencing there is a portion of sequencing reads that cannot be associated with any known microorganisms in a particular environment, and these reads are often discarded inappropriately.

To address these limitations, we introduce a statistical approach based on a class of loglinear models which we call clique loglinear models that can assess both association among bacteria within a single sample (or across samples), and the likelihood of a specific bacterium, including a previously unknown bacterium, being in the sample. Our 
focus is on whole metagenome sequencing, as our goal is to identify bacteria and associations 
among them at various taxonomic levels (e.g., genus, species, or strain).

There are two ways our methodology is novel.  First, the way the data are pre-processed for analysis as a multi-way contingency table is new.  In whole metagenome sequencing a collection of reads is sampled from a biological community within one sample.  We align these reads to a database of microbial reference genomes, and the result is a categorical dataset showing the reference genomes to which each read aligns. In these data, one row corresponds to one read, and one column corresponds to a genome -- indicating the genomes to which each read maps.  The rows are independent if the reads are from different organisms and, often nearly independent even when they are from the same organism.   

Although initially counter-intuitive, this is seen empirically in a Bayesian context in \cite{Clarke:etal:2015}. In fact, assuming independence among a large number of reads is a reasonable first approximation because i) the number of nucleotides in the DNA molecules is very large so dependence will be rare, and ii) even when 
reads are regarded as dependent this rests partially on their gene products. Here we are not looking at gene products so dependence among them is irrelevant, making the dependence among reads smaller than one would initially expect. For the present we ignore the dependence among reads as a pragmatic approximation, and return to it in Section \ref{sec:discussion}. Hence, our procedure assumes that there are $B$ genomes and $R$ reads then we have $R$ outcomes of a categorical random variable that assumes one of $2^B$ values representing the possible patterns of align or not align that each read has for each genome. The details are given in Section \ref{sec:methodology}.

Second, we introduce clique loglinear models to search for associations among bacterial strains or other taxa using the reads. We do this by defining a stochastic search procedure, and a Bayes model average (BMA) that combines the most relecvant clique loglinear models found by the search. To represent the associations, we produce connectivity graphs showing which bacterial genera (or other taxonomic unit) are related by higher order terms. This is possible because a clique loglinear model is a compound of disjoint collections of higher order terms, each collection permitting all possible interactions amongst the categories at the taxonomic level under study.  Overall, clique loglinear models are a sparse subset of all hierarchical loglinear models \cite{bishop-et-2007}, and this is operationally satisfactory since the associations among bacterial strains are often sparse as well.  Otherwise put, the class of clique loglinear models is small enough to be tractable, yet large enough to be used for data summarization and model selection. Given the increasing speed of computing and accumulating knowledge
about which bacteria are in which microbiome, this task is likely to be easier in the future than it is now.

Fundamentally, in what follows, clique loglinear models are not proposed as  physical models for the interactions between genomes or other taxonomic levels, except possibly for a few narrow settings. Indeed, valid models for joint distributions would likely be dynamic as well as more complicated than clique loglinear models permit.  Instead, here, clique loglinear models are the basis of a search strategy for relationships among genomes as encapsulated by the higher order terms in the models.  Our evidence supports the supposition that the cliques found by clique loglinear models are present, possibly the ones most strongly present, even if the collection of associations they represent is incomplete.  Hence, we are de facto using models as if they were summary statistics for a data set rather than as a statement about the real phenomenon, which is often too complicated to model at present. We argue that, as summary statistics, clique loglinear models capture enough information in the data that the results of the search strategy are useful.

The structure of this paper is as follows.  In Section \ref{sec:methodology} we describe how our data are prepared for analysis, and formally define the clique loglinear model class.  We also discuss several existing ways to analyze the NGS data that, unfortunately, do not extend to large values of $B$. This leads to introducing a stochastic search procedure that enables us to compute the various quantities of interest so we can make inferences about the presence of various strains, species, and genera of bacteria and their associations
within a given taxonomic level. Section \ref{sec:sims} presents a series of simulations to verify that our methodology qualitatively generates the results one would anticipate. In Sections \ref{sec:analyses} and \ref{sec:footwound} we analyze two datasets, and interpret our results in their scientific contexts.  For the first of these our results are consistent with the findings from a more traditional  approach to analysis of the same data. For the second, we generate results that seem plausible given the experimental context; there is  no previous analysis for comparison purposes. This shows that our method provides an alternative to expensive laboratory work. Finally, in Section \ref{sec:discussion} we discuss how several features of our formalism relate to the real biological questions we have addressed.

\section{Analyzing NGS data using clique loglinear models}
\label{sec:methodology}

In this section we motivate and outline our overall methodology for using metagenomic NGS data
to detect associations among, say, bacterial strains or genera.  In order, we explain i) the 
pre-processing of the NGS data into contingency tables (Section \ref{sec:repdata}), ii) what clique loglinear models are (Section \ref{sec:cliqueloglin}), iii) existent methods for loglinear model determination (Section \ref{sec:largeB}), and iv) our analytical methodology (Section \ref{sec:ourmeth}). 

\subsection{Representing NGS data as a sparse contingency table}
\label{sec:repdata}

Suppose there are $r=1, \ldots ,R$ reads and $b=1, \ldots , B$ known bacterial genomes. Because of the 
way sequencing reads are generated, they can match none, one, or several bacterial strains on a list
$\{C_1, \ldots , C_B\}$.  Table \ref{tab:actualData} shows a hypothetical example of such a data set. 
The first read only matches bacterial strain 1, while reads $2, \ldots, R$ each match at least two strains.
Once the data have been put in the form of Table \ref{tab:actualData},
the patterns of matches and non-matches define candidate interactions involving two, 
three or more strains and can be regarded as generated by $B$ binary categorical variables 
each evaluated at one of the $R$ reads.

Now, the sample of $R$ reads from a metagenomic population of bacterial genomes and the list 
of bacterial genomes can be represented by a
$R \times B$ matrix $(c_{rb})_{RB}$ that we call a connectivity matrix, in which
\begin{eqnarray}
c_{rb} =
\begin{cases}
 1, & \hbox{read} ~r~ \hbox{aligns to strain} ~ b, \\
 0, & \hbox{read} ~ r~ \hbox{does not align to strain} ~ b .
 \end{cases} 
 \nonumber
 \end{eqnarray}
Let $\mathcal{B}=\{1,2,\ldots,B\}$. Each row may be regarded as a vector valued outcome of the vector valued random variable  $\mathbf{X}_{\mathcal{B}} = (X_1, \ldots, X_B)$ in which $X_b=X_b(r)$ is the indicator variable for 
a sampled read $r$ to align (or match) to genome $b$.  Each outcome of $\mathbf{X}_{\mathcal{B}}$ assumes one of $2^B$ patterns of zeroes and ones in $\mathcal{X}_{\mathcal{B}} = \{ 0,1\}^B$.  These vectors of length $B$
generate a $B$-dimensional contingency table $\mathbf{n}_{\mathcal{B}}$ in which the count $\mathbf{n}_{\mathcal{B}}(\mathbf{x}_{\mathcal{B}})$ in cell $\mathbf{x}_{\mathcal{B}}\in \mathcal{X}_{\mathcal{B}}$ gives the number of reads that share the same pattern of alignments to the $B$ genomes. The case $B=3$ is shown in 
Figure \ref{fig:contingency_table}.    

We want to model the joint distribution of $\mathbf{X}_B$ to obtain estimates 
of interesting cell probabilities $\p_{\mathcal{B}}(\mathbf{x}_{\mathcal{B}}) = \Pr(\mathbf{X}_{\mathcal{B}} = \mathbf{x}_{\mathcal{B}})$, and relations amongst them. For instance, 
\begin{align} 
\label{eq:probunknown}
\Pr(X_{1}=0,\ldots,X_{B}=0)
\end{align}
\noindent is the probability that a sampled read aligns to none of the $B$ reference genomes. If 
the estimate of \eqref{eq:probunknown} is high, we might infer that we
have found a bacterium or other microbial source not amongst the $C_b$'s. 
By contrast,
\begin{align} 
\label{eq:probshortsequence}
\Pr(X_{b^{*}}=1,\{ X_{b}=0,\ \forall \ b\ne b^{*}\})
\end{align}
\noindent is the probability that a sampled read comes from $C_{b*}$ and does {\it not} come from any of the other $(B-1)$ genomes. To identify the bacteria that are most likely to be present, it is natural to pick the $C_b$'s with the highest values of \eqref{eq:probshortsequence}. 

\begin{table}[htp]
\begin{center}
\begin{tabular}{|c|cccc|}\hline
 Read & Genome 1 & Genome 2 & $\cdots$ & Genome $B$ \\ \hline
  1 & {\it Match} & No match & $\cdots$ & No match\\
  2 & {\it Match} & {\it Match} & $\cdots$ & No match\\
  3 & {\it Match} & No match & $\cdots$ & {\it Match}\\
$\vdots$ & $\vdots$ & $\vdots$ & $\cdots$ & $\vdots$\\
  $R$ & {\it Match} & {\it Match} & $\cdots$ & {\it Match}\\ \hline
\end{tabular}
\caption{\label{tab:actualData} In this $R \times B$ data matrix,
each of the $r=1, \ldots , R$ rows represents a short read and is 
regarded as a data point.  The $b=1, \ldots , B$ entries for each row represent whether or not the short read $r$ matches the genome $b$
represented by the column.}
\end{center}
\end{table}%

\begin{figure}
\centering
\centerline{
\includegraphics[width = 2in]{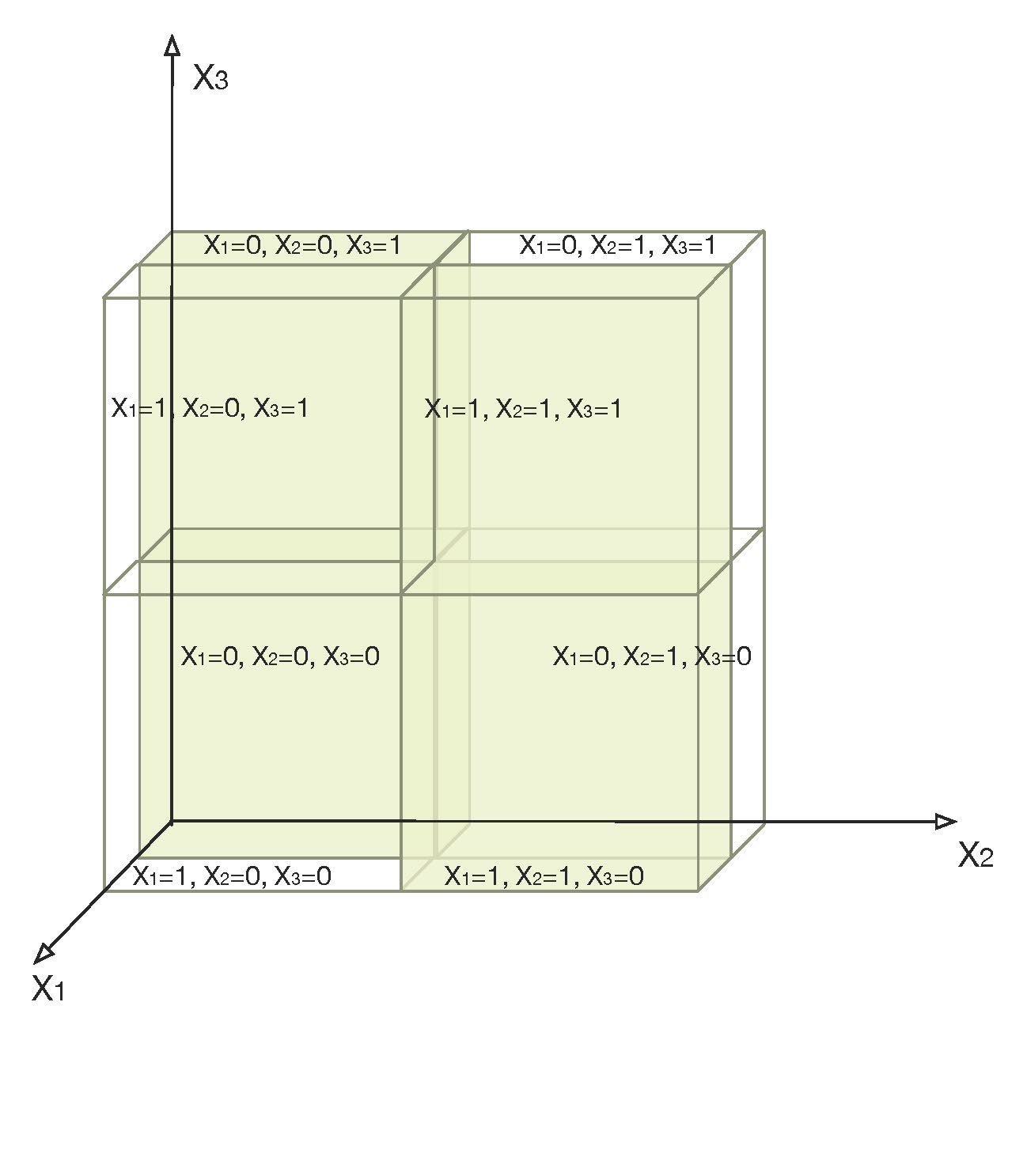}
}
\caption{\label{fig:contingency_table}An example contingency table formed by $B$=3 bacterial genomes. The table is $2\times 2\times 2$ and has $2^3 = 8$ cells, i.e., distinct vectors of $0$s and $1$s of length 3. Each cell contains the count of the number of reads whose vector $(X_1,X_2,X_3)$ matches the pattern shown.}
\end{figure}
	
Once the data form a connectivity matrix as in Figure \ref{fig:contingency_table}, loglinear models \cite{bishop-et-2007} are well suited to represent associations among bacterial taxa in a community. Loglinear models are flexible, interpretable, and describe the dependence of cell counts on the component categorical variables (or bacterial taxa, in our case). Selection and estimation for loglinear models has been well studied in the literature \cite{edwards-havranek-1985,whittaker-1990}, but satisfactory answers for high-dimensional contingency tables remain elusive. Nevertheless, large values of $B$ arise naturally in metagenomics and the subset of loglinear models that we define here may be a promising way to begin answering questions about the interactions within taxonomic levels.

\subsection{Clique loglinear models}
\label{sec:cliqueloglin}

For a set $C\subseteq \mathcal{B}$, we denote $\mathcal{X}_C = \{0,1\}^{|C|}$, where $|C|$ stands for the number of elements of $C$. The subvector $\mathbf{X}_C$ of $\mathbf{X}_{\mathcal{B}}$ takes values $\mathbf{x}_C\in \mathcal{X}_C$. The $C$-marginal $\mathbf{n}_C$ of $\mathbf{n}_{\mathcal{B}}$ has cell counts $\mathbf{n}_C(\mathbf{x}_C)=\sum_{\mathbf{x}_{\mathcal{B}\setminus C}} \mathbf{n}_{\mathcal{B}}(\mathbf{x}_C,\mathbf{x}_{\mathcal{B}\setminus C})$. The corresponding marginal cell probabilities are $\p_C(\mathbf{x}_C) = \Pr(\mathbf{X}_C = \mathbf{x}_C)$.

Consider a hierarchical loglinear model $\M$ with $k$ generators $\mathcal{C}({\M})=\{ C_1,C_2,\ldots,C_k\}$, where $C_j\subseteq \mathcal{B}$, for $j=1,\ldots,k$, and $k\ge 1$ \cite{bishop-et-2007,edwards-havranek-1985}. Under this model, the cell probabilities associated with $\mathbf{X}_{\mathcal{B}}$ are represented as \cite{whittaker-1990}:
\begin{eqnarray}\label{eq:loglindec}
 && \log \p_{\mathcal{B}}(\mathbf{x}_{\mathcal{B}}) = u_{\emptyset}+\sum\limits_{\{C: \emptyset \ne C\subseteq C_j ~\hbox{for some}~j\in \{1,\ldots,k\}\}} u_C(\mathbf{x}_C).
\end{eqnarray}
Here $u_{\emptyset}$ is an intercept, and $\{u_C(\mathbf{x}_C):\mathbf{x}_C\in \mathcal{X}_C\}$ is the $|C|$-way interaction associated with the subvector $\mathbf{X}_C$ of $\mathbf{X}_B$. This model can be made identifiable either by imposing the sum to zero constraints $\sum_{\mathbf{x}_C\in \mathcal{X}_C} u_C(\mathbf{x}_C) = 0$, or by imposing the baseline equal with zero constraints that set $u_C(\mathbf{x}_C) = 0$ if one element of $\mathbf{x}_C$ is zero. For the latter, the loglinear expansion \eqref{eq:loglindec} becomes:
\begin{eqnarray}\label{eq:loglindecidentifiable}
 && \log \p_{\mathcal{B}}(\mathbf{x}_{\mathcal{B}}) = u_{\emptyset}+\sum\limits_{\{C: \emptyset \ne C\subseteq C_j ~\hbox{for some}~j\in \{1,\ldots,k\}\}} u_C\prod\limits_{i\in C}x_i,
\end{eqnarray}
\noindent where $u_C = u_C(1,\ldots,1)$.

A hierarchical loglinear model $\M$ is a clique loglinear model if its generators form a partition of $\mathcal{B}$: $\bigcup_{j=1}^k C_j = \mathcal{B}$, $C_{j_1}\cap C_{j_2} = \emptyset$ for $j_1\ne j_2$. In this case, the cell probabilities \eqref{eq:loglindecidentifiable} are written as:
\begin{eqnarray}\label{eq:cliqueloglin}
 && \log \p_{\mathcal{B}}(\mathbf{x}_{\mathcal{B}}) = u_{\emptyset}+\sum\limits_{j=1}^k\sum\limits_{\{C: \emptyset \ne C\subseteq C_j\}} u_C\prod\limits_{i\in C}x_i.
\end{eqnarray}
Thus, under a clique loglinear model, the log cell probabilities are decomposed as a sum of groups of interaction terms in which each group represents a collection of categorical variables that may interact with each other in all possible ways, but do not interact at all with categorical variables in other groups. Formally, the interpretation of clique loglinear models comes from this result:
\begin{proposition} \label{prop:cliqueloglin}
 Let $D_1$ and $D_2$ be two subsets of $\mathcal{B}$ that are also subsets of two different generators of a clique loglinear model $\M$. Then the random subvectors $\mathbf{X}_{D_1}$ and $\mathbf{X}_{D_2}$ are independent.
\end{proposition}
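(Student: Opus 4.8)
The plan is to use the defining feature of a clique loglinear model, namely that its generators \emph{partition} $\mathcal{B}$, to show that the joint mass function factors across the generators; independence of subvectors drawn from distinct generators is then a routine consequence. The partition hypothesis is the only ingredient that distinguishes this from a general hierarchical model, and it is exactly what makes the factorization clean.

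First I would exponentiate the expansion \eqref{eq:cliqueloglin}. Because the $C_j$ are disjoint, every coordinate $x_i$ lies in exactly one generator, so each interaction term $u_C \prod_{i \in C} x_i$ involves coordinates from a single block $C_j$. Setting $g_j(\mathbf{x}_{C_j}) = \exp\bigl( \sum_{\emptyset \neq C \subseteq C_j} u_C \prod_{i \in C} x_i \bigr)$, the expansion becomes the factorization
\begin{equation*}
\p_{\mathcal{B}}(\mathbf{x}_{\mathcal{B}}) = e^{u_{\emptyset}} \prod_{j=1}^k g_j(\mathbf{x}_{C_j}),
\end{equation*}
a product of nonnegative functions, each depending on a disjoint block of coordinates. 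Next I would upgrade this to stochastic independence of the generator-subvectors $\mathbf{X}_{C_1}, \ldots, \mathbf{X}_{C_k}$. Summing $\p_{\mathcal{B}}$ over $\mathbf{x}_{\mathcal{B} \setminus C_j}$ and using disjointness to split the sum into a product over $l \neq j$ shows that the marginal $\p_{C_j}(\mathbf{x}_{C_j})$ is proportional to $g_j(\mathbf{x}_{C_j})$. Consequently both $\p_{\mathcal{B}}(\mathbf{x}_{\mathcal{B}})$ and the product measure $\prod_{j=1}^k \p_{C_j}(\mathbf{x}_{C_j})$ are probability distributions on $\mathcal{X}_{\mathcal{B}}$ proportional to the common kernel $\prod_j g_j(\mathbf{x}_{C_j})$; since each sums to one, their proportionality constants agree and the two measures coincide, which is precisely mutual independence of $\mathbf{X}_{C_1}, \ldots, \mathbf{X}_{C_k}$.

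To conclude, I would invoke $D_1 \subseteq C_{j_1}$ and $D_2 \subseteq C_{j_2}$ with $j_1 \neq j_2$: the subvectors $\mathbf{X}_{D_1}$ and $\mathbf{X}_{D_2}$ are coordinate projections of the independent blocks $\mathbf{X}_{C_{j_1}}$ and $\mathbf{X}_{C_{j_2}}$, and measurable functions of independent random vectors remain independent. The proof carries no real obstacle; the one point demanding care is the bookkeeping of normalizing constants, which the ``two normalized distributions proportional to the same kernel are equal'' device dispatches without ever evaluating the $g_j$-sums. An equivalent shortcut would be to cite the standard factorization criterion: a joint mass function that factors as a product of nonnegative functions of disjoint coordinate blocks automatically makes those blocks independent.
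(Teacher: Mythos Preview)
Your argument is correct. The route differs slightly from the paper's: the paper marginalizes directly over $\mathcal{B}\setminus(D_1\cup D_2)$ and asserts that the log of the $(D_1\cup D_2)$-marginal separates additively into a term in $\mathbf{x}_{D_1}$ and a term in $\mathbf{x}_{D_2}$, from which independence is read off. You instead first establish the full product decomposition $\p_{\mathcal{B}}(\mathbf{x}_{\mathcal{B}})=\prod_{j}\p_{C_j}(\mathbf{x}_{C_j})$ across all generators---which the paper presents as a \emph{consequence} of the proposition, equation \eqref{eq:probdecomp}---and then obtain the independence of $\mathbf{X}_{D_1}$ and $\mathbf{X}_{D_2}$ as projections of independent blocks. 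Your version is a bit longer but is more explicit about the normalizing-constant bookkeeping that the paper's ``collapse and observe the form'' step leaves implicit, and it delivers \eqref{eq:probdecomp} en route rather than afterward.
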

\begin{proof}
 We collapse across the levels of $\mathbf{X}_{\mathcal{B}\setminus (D_1\cup D_2)}$ in the loglinear expasion \eqref{eq:cliqueloglin}. The marginal cell probabilities associated with $\mathbf{X}_{D_1\cup D_2}$ have the form:
$$
 \log \p_{D_1\cup D_2}(\mathbf{x}_{D_1},\mathbf{x}_{D_2}) =  u_{\emptyset}+ \sum\limits_{\{C: \emptyset \ne C\subseteq D_1\}} u_C\prod\limits_{i\in C}x_i + \sum\limits_{\{C: \emptyset \ne C\subseteq D_2\}} u_C\prod\limits_{i\in C}x_i.
$$
Since the first term is a constant, the second term is a function of the levels of $\mathbf{X}_{D_1}$, and the third term is a function of the levels of $\mathbf{X}_{D_2}$, it follows that $\mathbf{X}_{D_1}$ and $\mathbf{X}_{D_2}$ are indeed independent.
\end{proof}
A consequence of Proposition \eqref{prop:cliqueloglin} is that the cell probabilities of $\M$ decompose as a product of marginal cell probabilities associated with its generators:
\begin{eqnarray} \label{eq:probdecomp}
  \p_{\mathcal{B}}(\mathbf{x}_{\mathcal{B}}) = \prod\limits_{j=1}^k \p_{C_j}(\mathbf{x}_{C_j}).
\end{eqnarray}
We denote by $\mathbf{u}_{\M}$ all the interaction terms that appear in \eqref{eq:cliqueloglin}. Under Multinomial sampling, the log-likelihood function is written as a function of the interaction terms as follows:
\begin{eqnarray} \label{eq:loglik}
 l(\mathbf{u}_{\M},\mathbf{n}_{\mathcal{B}}) = R u_{\emptyset} + \sum\limits_{j=1}^k\sum\limits_{\{C: \emptyset \ne C\subseteq C_j\}} u_C \mathbf{n}_C(\mathbf{x}_C) \prod\limits_{i\in C}x_i.
\end{eqnarray}
By using Lagrange multipliers in \eqref{eq:loglik} and \eqref{eq:probdecomp} \cite{whittaker-1990}, it can be shown that the MLEs of the cell probabilities under $\M$ are
\begin{eqnarray} \label{eq:mlescliqueloglin}
 \widehat{p} _{\mathcal{B}}(\mathbf{x}_{\mathcal{B}}) = R^{-k} \prod\limits_{j=1}^k \mathbf{n}_{C_j}(\mathbf{x}_{C_j}).
\end{eqnarray}
Equation \eqref{eq:mlescliqueloglin} shows that the MLEs of the cell probabilities of a clique loglinear model exist if and only if the counts in the marginal tables associated with its generators are strictly positive. This existence criterion is easily applicable in a computational efficient manner. By contrast, determining the existence of the MLEs for arbitrary hierarchical loglinear models is a difficult problem that has been solved theoretically \cite{fienberg-rinaldo-2007}. However, at the present time, there do not seem to exist any implementable algorithms for assessing the existence of MLEs of hierarchical loglinear models that are also computationally efficient when the number $B$ of categorical variables involved is large.

We named this class of loglinear models based on the representation of the interaction structure defined by the u-terms $\mathbf{u}_{\M}$ as an independence graph \cite{whittaker-1990}. This is an undirected graph $G$ with vertices $\mathcal{B}$ and set of edges $E$. Each element $b\in \mathcal{B}$ is associated with the component $X_b$ of $\mathbf{X}_{\mathcal{B}}$.  An edge $e=(b_1,b_2)$ is included in $E$ if there is a generator $C_j$ of $\M$ such that $\{ b_1,b_2\}\subseteq C_j$. Proposition \eqref{prop:cliqueloglin} implies that the independence graph $G$ of a clique loglinear model $\M$ has a special structure: the generators of $\M$ are the connected components of $G$, and are also maximal complete subgraphs or cliques \cite{lauritzen-1996}. As such, the independence graph of a clique loglinear model is obtained by putting together complete subgraphs without adding any edge between them. These cliques are the generators of the loglinear model, and uniquely identify it.

The class of clique loglinear models is a subset of decomposable loglinear models, which, in turn, is a subset of graphical loglinear models that are themselves a subclass of hierarchical models \--- see Appendix, Section \ref{sec:nested}. The restriction to clique loglinear models offers key computational advantages: in addition to an easy way to calculate the MLEs and check their existence, these models are straightforward to interpret (Proposition \eqref{prop:cliqueloglin}), and allow the development of computationally efficient model determination algorithms that scale well when $R$ or $B$ become large.

The number of clique loglinear models for $B$ categorical variables is the number of decomposition of $B$ into integers \cite{abramowitz-stegun-1972}:
\begin{eqnarray} \label{eq:numbercliqueloglin}
 \mathcal{P}(B) = \frac{1}{\pi\sqrt{2}}\sum_{j=1}^{\infty}\sqrt{j}A_j(B)\frac{\di}{\di n} \frac{\sinh\left( \frac{\pi}{j}\sqrt{\frac{2}{3}\left( B-\frac{1}{24}\right)}\right)}{\sqrt{B-\frac{1}{24}}},
\end{eqnarray}
where 
$$
 A_j(B) = \sum\limits_{0< h\le j, (h,j)=1} e^{\pi i \left( s(h,j) - \frac{2 h B}{j}\right)}, \quad s(h,j) = \sum_{l=1}^{j-1} \frac{l}{j} \left(\left( \frac{hl}{j}\right)\right),
$$
with $((x)) = x-[x]-\frac{1}{2}$ if $x$ is an integer, and $0$ otherwise, and $(h,j)$ is the greatest common divisor of $h$ and $j$. For example, $\mathcal{P}(100) = 190,569,292$, $\mathcal{P}(200) \approx 3.973e+12$ and $\mathcal{P}(1000) \approx 1.321e+19$ \cite{hankin-2006}. Therefore,  although this is the smallest class of hierarchical loglinear models, it still contains a significantly large number of possible models that allow modeling various patterns of interactions among many categorical variables.

\subsection{Loglinear model selection methods}
\label{sec:largeB}

Capturing associations can be done by determining models for the joint distributions of the observed categorical variables while recognizing that these random variables do not vary independently of each other. However, the association structure within a microbial community is likely to be sparse because most of the possible model higher order terms are likely to be discarded. This happens because, given the length of the list of reference genomes, most bacteria only occur jointly with a relatively small number of other bacteria. Given this, we argue that classes of hierarchical loglinear models are suitable for representing multivariate associations in high-dimensional sparse contingency tables.  However, existent methods for loglinear model selection do not seem to be effective for such tables.  

Severe difficulties can arise from the sparsity of high-dimensional tables \cite{fienberg-rinaldo-2007} such as the invalidation of asymptotic approximations to the  null distribution of the generalized likelihood ratio test statistic.  For this reason we use a Bayes formulation that avoids these issues through the specification of prior distributions for model parameters \cite{clyde-george-2004}. The limitation is that prior selection for model selection problems is known to be difficult and sometimes controversial. We try to evade this problem by defaulting to a flat prior in the hope that the data will be sufficiently informative as to generate an informative posterior.

The key difficulty is the size of the space of possible loglinear models.  For example, when $B=5$ the number of possible hierarchical loglinear models is $7580$; for $B=8$ variables this number increases to $5.6\times 10^{22}$ \cite{dellaportas-forster-1999}. In a Bayesian framework, one approach for the analysis of higher dimensional contingency tables is called copula Gaussian graphical models \cite{dobra-lenkoski-2011}, and it has successfully been used to analyze a 16-dimensional table. More recently, ultra-sparse high-dimensional contingency tables 
have been analyzed using probabilistic tensor factorizations induced through a Dirichlet process mixture model of product multinomial distributions \cite{dunson-xing-2009,canale-dunson-2011,bhattacharya-dunson-2012,kunihama-dunson-2013}. These papers present simulation studies and real-world data examples that involve up to 50 categorical variables. 
While promising, Bayesian sparse tensor factorization methods have yet to be applied to categorical data sets with more variables, say $70$, which is the setting of greatest interest for the problems we address here.

Because the space of possible models is extremely large, various stochastic search schemes have been used to identify models with high posterior probability. \cite{dellaportas-forster-1999} is a key reference, although there are other papers that develop stochastic search schemes for discrete data \cite{madigan-raftery-1994,madigan-york-1995,madigan-york-1997,tarantola-2004,dellaportas-tarantola-2005,dobra-massam-2010}. One feature of these and other stochastic searches on spaces of hierarchical, graphical and 
decomposable loglinear models  \--- see, for example, \cite{massam-et-2009} \--- is that these models involve repeated transitions from one model to another model. This necessitates ensuring the next model is still in the target space of models. For instance, for decomposable graphs, transitioning from a current decomposable graph to another decomposable graph involves checks that the decomposability property is preserved. While such checks can be done relatively quickly for graphs with few vertices, for graphs that involve hundreds of
vertices the running time of stochastic searches increases rapidly.

Since considerable computational effort is required to visit loglinear models sequentially by adding and removing higher order terms, restricting the model space to, say, clique loglinear models provides a necessary reduction in the running time of model determination algorithms. This makes the required computations intensive yet feasible. 

\subsection{Our stochastic search method}
\label{sec:ourmeth}

Let $\M$ vary over the collection $\mathcal{M}$ of $B$-dimensional clique loglinear models for which the MLEs exist. We want to find $\M$'s that fit the data well and are parsimonious. For this purpose we choose the Bayesian information criterion (BIC). For large sample sizes, it is well known that the BIC is an approximation to the mode of a posterior distribution over a model space. The BIC is also optimal in a Bayes testing sense \cite{schwarz-1978}. The calculation of BIC for clique loglinear models proceeds as follows. Denote by $\mathcal{C}({\M})=\{ C_1,C_2,\ldots,C_k\}$ the generators of $\M$. The MLEs of the mean cell values under $\M$ are calculated based on \eqref{eq:mlescliqueloglin}:
\begin{eqnarray}\label{eq:mlescells}
 &&  \log \widehat{m}_{\mathcal{B}}(\mathbf{x}_{\mathcal{B}}) = \log (R \widehat{p} _{\mathcal{B}}(\mathbf{x}_{\mathcal{B}})) = \sum\limits_{j=1}^k \log \mathbf{n}_{C_j}(\mathbf{x}_{C_j}) - (k-1)\log R,
\end{eqnarray}
\noindent for all $\mathbf{x}_{\mathcal{B}} \in \mathcal{X}_{\mathcal{B}}$. From \eqref{eq:cliqueloglin} we see that the number of free interaction terms that appear in $\M$ is equal with the sum of the number of nonempty subsets of the generators of $\M$. Therefore the BIC of $\M$ is given by
\begin{eqnarray} \label{eq:bicformula}
 BIC(\M) & = & -2\sum\limits_{\{\mathbf{x}_{\mathcal{B}} \in \mathcal{X}_{\mathcal{B}} :\mathbf{n}_{\mathcal{B}}(\mathbf{x}_{\mathcal{B}})>0\}} \mathbf{n}_{\mathcal{B}}(\mathbf{x}_{\mathcal{B}}) \log \widehat{m}_{\mathcal{B}}(\mathbf{x}_{\mathcal{B}})\\ \nonumber &&  + \left( \sum\limits_{j=1}^k 2^{|C_j|} - k + 1\right)  \log R.
\end{eqnarray}
Equations \eqref{eq:mlescells} and \eqref{eq:bicformula} show that the BIC of a clique loglinear model can be efficiently calculated even for large contingency tables since no iterative numerical optimization methods are involved as it would have been the case for arbitrary graphical and hierarchical loglinear models. The calculation of the log mean cell values can also be performed using a formula for decomposable loglinear models \cite{lauritzen-1996}, but the calculation of the number of free interaction terms of these models would have been complicated by their overlapping sets of generators. For this reason, the calculation of BIC for clique loglinear models is easier as compared to any other loglinear model that does not belong to this class.

Consider the following distribution over $\mathcal{M}$:
\begin{align} 
\label{eq:targetdist}
 \pi(\M) \propto \exp( -BIC(\M)).
 \end{align}
Finding clique loglinear models with smaller values of BIC is equivalent to finding models at or close to the modes of the distribution \eqref{eq:targetdist}. We can think of $\pi(\M)$ as a posterior distribution over $\mathcal{M}$ obtained by assuming a flat prior over $\mathcal{M}$. Thus, the $\pi(\M)$'s can be considered to be the Bayes model weights, and, in the sequal, these weights will be used to perform model averaging using Occam's window. This methodology originates in \cite{madigan-raftery-1994}, and has been developed in numerous other contexts, e.g., dynamic linear models \cite{onorante-raftery-2016} and graphical models \cite{madigan-et-1995}. For a more in depth discussion, see Supplementary Materials, Section \ref{sec:reallybayes}.

Our goal is to find clique loglinear models that have large posterior weights \eqref{eq:targetdist}. The largest would  achieve
\begin{align}
\widehat{\M}_{\sf opt} = \arg ~ \max_{{\mathcal{M}}} \pi(\M).
\label{eq:actual}
\end{align}
However, models that have posterior weights comparable to that of the optimal model $\widehat{\M}_{\sf opt}$ are also relevant. The stochastic search algorithm we propose below is devised to seek the set of models
\begin{eqnarray}\label{eq:targetset}
 \mathcal{S}(c) = \left\{ \M\in \mathcal{M}: \pi(\M)\ge c \pi(\widehat{\M}_{\sf opt})\right\},
\end{eqnarray}
\noindent where $c\in (0,1)$ is a constant that needs to be specified before the start of the algorithm. The clique loglinear models that do not belong to $\mathcal{S}(c)$ are discarded. The idea of eliminating models with low posterior probability compared to the highest posterior probability model is based on the Occam's window principle of \cite{madigan-raftery-1994}.
 
For ease of exposition we begin by stating our procedure informally.   Our stochastic search procedure moves towards models with larger values of $\pi(\M)$. The models that are visited in a run are collected as if in a bag. Each run of the stochastic search algorithm collects models until it appears to reach a local optimum. At that point the stochastic search algorithm will likely visit only models that are already in the bag. We use many different runs, and combine all the bags of models collected in each run into a larger bag $\mathcal{S}$. Out of this bag, we only retain those models that have comparable posterior weights with the best model identified across all runs:
\begin{eqnarray}\label{eq:targetsetapprox}
 \widehat{\mathcal{S}}(c) = \left\{ \M\in \mathcal{S}: \pi(\M)\ge c \max\limits_{\M\in \mathcal{S}} \pi(\M)\right\}.
\end{eqnarray}
Across multiple runs that were sufficiently long, we would hope that $\widehat{\mathcal{S}}(c)$ from \eqref{eq:targetsetapprox} will approximate well $\mathcal{S}(c)$ in \eqref{eq:targetset}. This is very likely to happen if $\widehat{\M}_{\sf opt}$ has been visited and included in $\mathcal{S}$. An empirical test for figuring out whether $\widehat{\M}_{\sf opt}$ was indeed identified is to determine the proportion of runs that reached $\arg ~ \max_{{\mathcal{S}}} \pi(\M)$. A high proportion of runs that ended up visiting the best model in $\mathcal{S}$ represents a good indication that $\widehat{\M}_{\sf opt}$ might indeed be in $\mathcal{S}$. In the sequel, we perform Bayes model averaging using the models in $\widehat{\mathcal{S}}(c)$ with weights in \eqref{eq:targetdist}, and this lets us estimate the quantities of interest. Models not in $\widehat{\mathcal{S}}(c)$ are discarded; this is justified if $\widehat{\mathcal{S}}(c)$ comprises most models that have large posterior probabilities.   

Our stochastic algorithm for identifying $\mathcal{S}(c)$ from \eqref{eq:targetset} proceeds as follows.  We start with a randomly generated clique model. If any of the marginals associated with the generators of this model contain counts of zero, the MLEs of this model do not exist and another random model is generated. We repeat these steps until a valid clique loglinear model is generated; we denote this model with $\M_{0}$. Starting with $\M_{0}$ we generate a chain of models $\langle \M_{t} \rangle$ for $t =1, 2, \ldots$. At step $t$, with equal probability, we select one of the following four ways of producing a valid (i.e., for which the MLEs \eqref{eq:mlescliqueloglin} exist) candidate clique loglinear model $\M^{\prime}$:

\noindent {\it (i)} Split a random clique of $\M_t$ into two cliques.\\ 
\noindent {\it (ii)} Join two random cliques of $\M_t$ into a clique. \\
\noindent {\it (iii)} Switch two random elements that belong to two random cliques of $\M_t$.\\
\noindent {\it (iv)} Move a random element of a random clique of $\M_t$ to another random clique of $\M_t$. 

After sampling a move of type (i), (ii), (iii) or (iv), we produce a clique loglinear model $\M^{\prime}$ by applying a move of that type to model $\M_{t}$. For moves of type (i), we uniformly select a clique of $\M_{t}$ and divide the elements in that clique $\mathcal{C}^{\prime}$ into two disjoint sets $\mathcal{C}^{\prime}_1$ and $\mathcal{C}^{\prime}_2$ that become two new cliques of  $\M^{\prime}$. The other cliques of  $\M_{t}$ are also cliques for $\M^{\prime}$. For moves of type (ii), we uniformly sample two cliques $\mathcal{C}^{\prime}_1$ and $\mathcal{C}^{\prime}_2$ of $\M_{t}$, and form a new clique $\mathcal{C}^{\prime}=\mathcal{C}^{\prime}_1\cup \mathcal{C}^{\prime}_2$. The other cliques of $\M_{t}$ together with $\mathcal{C}^{\prime}$ are the cliques of $\M^{\prime}$. For moves of type (iii), we uniformly sample two cliques $\mathcal{C}^{\prime}_1$ and $\mathcal{C}^{\prime}_2$ of $\M_{t}$, and also uniformly sample a element $v_1\in \mathcal{C}^{\prime}_1$ and a element $v_2\in \mathcal{C}^{\prime}_2$. We form two new cliques $\mathcal{C}^{\prime\prime}_1=\mathcal{C}^{\prime}_1\setminus \{v_1\} \cup \{v_2\}$ and $\mathcal{C}^{\prime\prime}_2=\mathcal{C}^{\prime}_2\setminus \{v_2\} \cup \{v_1\}$. The cliques $\mathcal{C}^{\prime\prime}_1$, $\mathcal{C}^{\prime\prime}_2$ together with the other cliques of $\M_{t}$ give the candidate model $\M^{\prime}$. For moves of type (iv), we uniformly sample two cliques $\mathcal{C}^{\prime}_1$ and $\mathcal{C}^{\prime}_2$ of $\M_{t}$, and also uniformly sample a element $v_1\in \mathcal{C}^{\prime}_1$. We form two new cliques $\mathcal{C}^{\prime\prime}_1=\mathcal{C}^{\prime}_1\setminus \{v_1\}$ and $\mathcal{C}^{\prime\prime}_2=\mathcal{C}^{\prime}_2\cup \{v_1\}$. The cliques $\mathcal{C}^{\prime\prime}_1$, $\mathcal{C}^{\prime\prime}_2$ together with the other cliques of $\M_{t}$ give the candidate model $\M^{\prime}$. 

If the MLEs of $\M^{\prime}$ do not exist, we set $\M_{t+1}=\M_{t}$. If the MLEs of  $\M^{\prime}$ exist, we set $\M_{t+1}=\M^{\prime}$ with probability $\min \{1,\pi(\M^{\prime})/\pi(\M_{t})\}$. Otherwise we set $\M_{t+1}=\M_{t}$. This stochastic search algorithm typically moves to models with larger $\pi(\M)$'s. If the sampled candidate model  $\M^{\prime}$ happens to have a smaller $\pi(\M)$ than the current model $\M_{t}$, the algorithm could still visit it with positive probability. This is useful because sometimes models with smaller $\pi(\M)$s must be visited before finding models with larger $\pi(\M)$'s.  This is the case when $\M$ is a local maximum but not a global maximum. The geometry of the space of models affects this:  getting stuck in a local maximum is not a problem if it is a global maximum; on the other hand,  the model space is discrete so it is possible that the models with the largest $\pi(\M)$'s are not very similar to each other.

Strictly speaking, only moves of type (i) and (ii) are needed to connect a clique loglinear model for which the MLE exists with any other clique loglinear model in $\mathcal{M}$. However, in computational results not included here, we found that an algorithm that included moves of types (iii) and (iv) was less likely to get stuck in local maxima of $\pi(\cdot)$. We note that this is a stochastic search procedure, not a Markov Chain Monte Carlo procedure so that the acceptance probabilities are not relevant; see the Supplementary Materials, Section \ref{sec:whatabout} for a discussion of this point. Furthermore, sparse contingency tables such as we are studying here frequently have unbalanced counts. This is rarely a problem for the BIC, as discussed in the Supplementary Materials, Section \ref{sec:skewed}.

\section{Simulation results}
\label{sec:sims}

To benchmark the performance of the method, we created a synthetic experiment with a known community dependency structure.  We obtained 2,273 bacterial genomes from the  National Center for Biotechnology Information (NCBI) GenBank database. These genomes were collected from GenBank's complete genome set, or those genomes that are considered to have a final DNA sequence per genomic structure (chromosomes and/or plasmids). From these 2,273 genomes, we randomly chose 200 genomes and created a population connectivity matrix representing 1,000 synthetic genomic reads that indicates the connectivity among the genomes.  As seen in Figure \ref{fig:sim_fig1}, each simulated read has a corresponding row in the connectivity matrix with a match for at least one genome; this is indicated by ``1''.   The matrix is based on a  file supplied to the simulation program that indicates which genomes are present and what cliques they form. If two genomes are in the same clique they are given 1s for 80\% of their joint reads (as assigned by i.i.d. $\bern(0.8)$ random variables). The remaining cells in the $1000 \times 200$ connectivity matrix are randomly filled with 0s and 1s sampled from a $\bern(0.2)$ distribution. This procedure gives a connectivity matrix consistent with a chosen clique structure on genomes.  Note that not all 200 genomes are shown because only some were in nontrivial cliques. Further details are given in the Supplementary Materials, Section \ref{sec:sims-supp}.

\begin{figure}
\centering
\centerline{
\includegraphics[width = 4in]{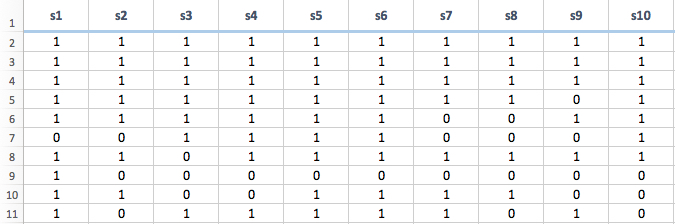}
}
\caption{\label{fig:sim_fig1}Upper left $11 \times 10$ block of the
connectivity
matrix for the
simulated dataset.}
\end{figure}

We use the connectivity matrix to generate a connectivity graph. A connectivity graph has vertices that represent distinct organisms and edges that represent higher order terms between their reads. This definition will be made more precise in Section \ref{sec:analyses} when we deal with real data. The connectivity graph for the synthetic reads is in panel A of  Figure \ref{fig:sim_fig2}.  We verified that two genomes are connected in panel A if and only  if they are connected by reads in the connectivity matrix.

\begin{figure}
\centering
\centerline{
\includegraphics[width = 4in]{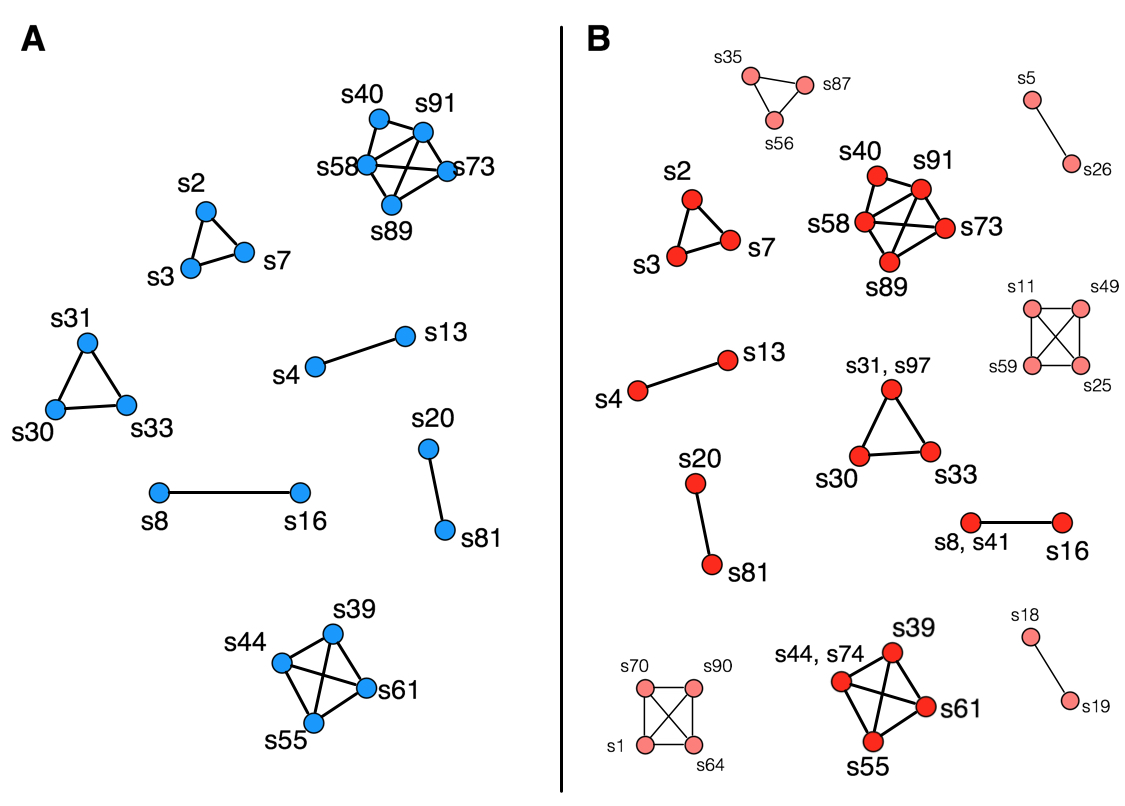}
}
\caption{\label{fig:sim_fig2}Panel A illustrates the known connectivities in the simulated data (blue). Panel 
B illustrates the connectivities generated by the model.}
\end{figure}

To verify our method generates a connectivity graph similar to panel A of Figure \ref{fig:sim_fig2},
we applied the method described in Section \ref{sec:ourmeth} using 200 chains each of length 200,000. We set $c = 10^{-4}$ in \eqref{eq:targetsetapprox}. We calculated BMA estimates of the posterior inclusion probabilities of edges in the corresponding independence graph based on the models in the set $\widehat{\mathcal{S}}(c)$. This generates the graph in panel B of Figure \ref{fig:sim_fig2}. We comment that there is nothing unique about the value $10^{-4}$:  it was chosen for convenience, was not discredited by the individual posterior probabilities we found, and a sensitivity analysis showed that it was a reasonable choice within a range of possible cutoff values. In practice, the choice of cutoff value would be data-driven to ensure appropriate robustness of the inferences.

In panel B of Figure \ref{fig:sim_fig2}, two genomes are connected if and if only if the sum of the posterior weights of the best models containing higher order terms between the two genomes is above $0.1$. Loosely, this is intuitively equivalent to saying that the posterior probability that the two genomes are associated (in the sense of higher order terms in loglinear models) is at least $0.1$. Comparing panels A and B, it is seen that every structure in A is in B although not every structure in B is in A. In panel B, the dark red circles indicate the pre-defined cliques built into the connectivity matrix and the faint red circles indicate extra cliques our method found by chance. So, in this simple case,  our method returns the full set of cliques built into the data. Thus, it is seen that even though clique loglinear models are a very restricted class, they can over-detect interactions. The reason is that in the simulated connectivity matrix, the cliques that are present are strongly built into the matrix; they will be found as long as enough reads are included.  However, cliques that are not present may also be found by our method from the random 1s in the connectivity matrix that do not correspond to genomes in any pre-defined cliques.  

The running time for our procedure -- using 200 genomes and 1000 reads -- was around six days. We used one compute node in a cluster; it had 16 CPU cores and 62 GB of memory. The code could be made significantly faster if implemented in C.  Our implementation in R is slower due to the limits imposed by this software package. Regardless, this simulated example goes well beyond earlier research with loglinear models in which 50 or fewer variables were used.   

\section{Example: characterizing associations in a microbial community}
\label{sec:analyses}

The Human Microbiome Project (HMP) is an ongoing collaborative study funded by the U.S. National Institutes of Health (NIH) to provide data and tools for studying the role of human microbiomes in human health and disease. Started in 2007 it has generated ground-breaking publications \cite{fierer-et-2010,zhao-et-2012,minot-et-2013}, and a plethora of metagenomic data on human microbiomes. Our method from Section \ref{sec:ourmeth} can represent the associations from an HMP sample with an independence graph
so we can infer the bacterial taxa present and their associations.

Human metagenome sample SRS015072, obtained from the vaginal microbiome of a female participant of the HMP Core Microbiome Sampling Protocol A (HMP-A) dbGaP study, was downloaded via FTP from the HMP Data Analysis and Coordination Center (DACC).  The sample consisted of $495,256$ paired-end, $100$ base pair reads (with an average mate-distance of $81$bp) sequenced and provided in Illumina FASTQ format. These reads were aligned to the collection of $4,940$ bacterial genomes, from the Integrated Microbial Genomes and Metagenomes (IMG, version 4.0) database \cite{markowitz-et-2014} using the Bowtie2 aligner \cite{langmead-salzberg-2012}. Of the sample reads, $369,633$ 
aligned to one or more of the reference genomes. The number of reads that aligned to each bacterial strain, species, and genera was calculated and connectivity tables of the form seen in Figures \ref{fig:contingency_table} or \ref{fig:sim_fig1} were generated for analysis at the genera level. 

The first step in each analysis is to identify those genera that cannot be involved in higher order interactions (i.e., cannot be part of a clique with two or more vertices). Note any two genera that define a marginal two-by-two table (disregarding all other genera) whose counts are not strictly positive cannot be part of the same clique because the MLEs of any clique loglinear model that involves that two-way interaction do not exist. We refine the definition of a connectivity graph as follows: it is a graph whose vertices correspond to categorical variables, i.e., the presence of a genus.  Given two vertices, there is an edge joining them if the two-way marginal contingency table associated with the two categorical variables contains only strictly positive counts. Within each analysis we ran the stochastic search from Section \ref{sec:ourmeth} for 100,000 iterations from 100 random starting clique graphs.

A total of $95$ genera had component species or strains to which reads aligned. Two genera were said to be connected if and only if each had at least one strain that shared a read. The genera with shared reads are shown in Figure \ref{fig:generaConnectivityGraph}. It is seen that 15 genera did not share reads across other genera (though each did within its own genus). This shows two facts:  i) 15 genera can be dropped from subsequent analysis at the genus level, and ii) the hairball showing the 80 genera that share at least one read is complex enough that further analysis is worth doing, i.e., it is worthwhile to use clique loglinear models to seek higher order interaction terms.

\begin{figure}
\centering
\centerline{
\includegraphics[width = 4in]{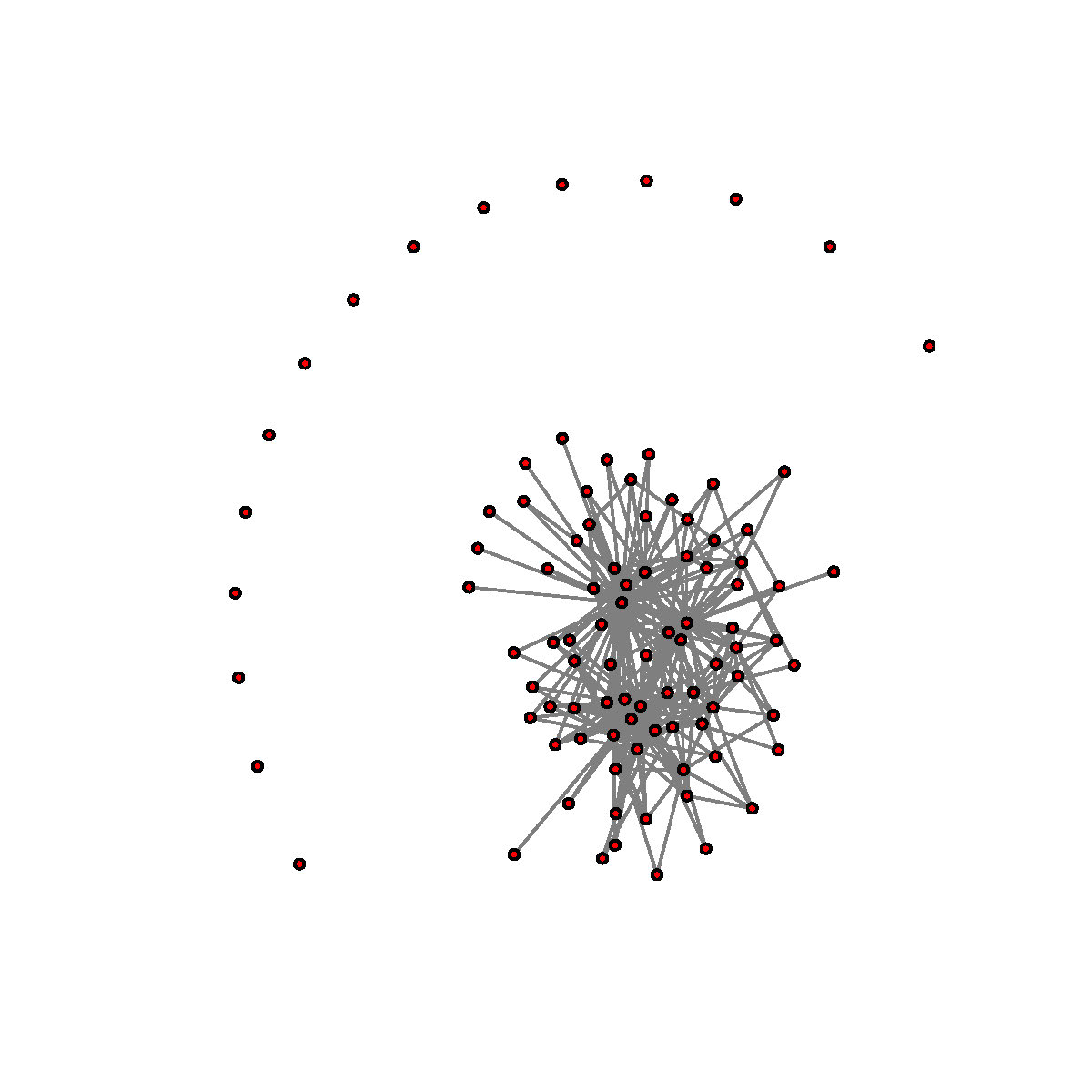}
}
\caption{\label{fig:generaConnectivityGraph}Genera connectivity graph for the example in Section \ref{sec:analyses}.}
\end{figure}

To get an idea of the level of complexity of the hairball in the raw connectivity graph in Figure \ref{fig:generaConnectivityGraph}, we generated Figure \ref{fig:generaConnectivityNeighbors}. It is a bar graph that gives the degree (the number of neighbors) of each element (genus) in the raw connectivity graph as a 
proportion of the total number of vertices (or genera). Several of these genera, including Lactobacillus, Prevotella, and Staphylococcus, have been identified as common members of vaginal microbiome communities \cite{huang-et-2014}. 

\begin{figure}
\centering
\centerline{
\includegraphics[height = 4.25in]{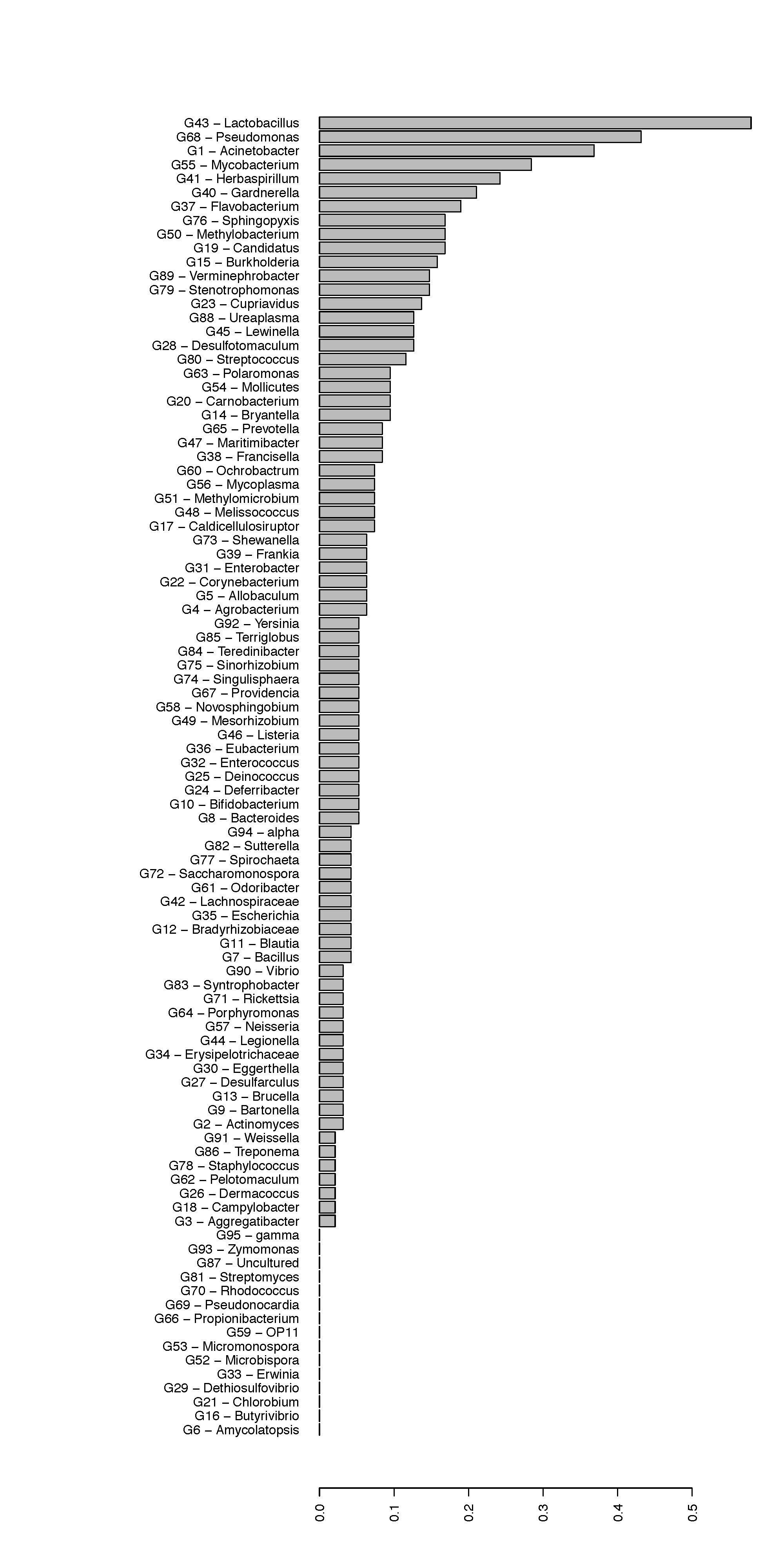}
}
\caption{\label{fig:generaConnectivityNeighbors}Degree of each vertex in the raw 
connectivity graph in Figure \ref{fig:generaConnectivityGraph}, expressed as a proportion of the total number of vertices, for the genera data.}
\end{figure}

Note that the 15 isolated genera in Figure \ref{fig:generaConnectivityGraph} cannot form cliques with any of the other genera because in the reduced table, i.e., grouping all strains into genera, the two-way marginals they form contain two counts of zero.  So, they cannot be accommodated by clique loglinear models -- except as cliques of size one -- and dropping them amounts to a significant reduction in computational running time. This is important because the number of possible clique loglinear models increases rapidly with the number of vertices.

\begin{figure}
\centering
\centerline{
\includegraphics[width = 4in]{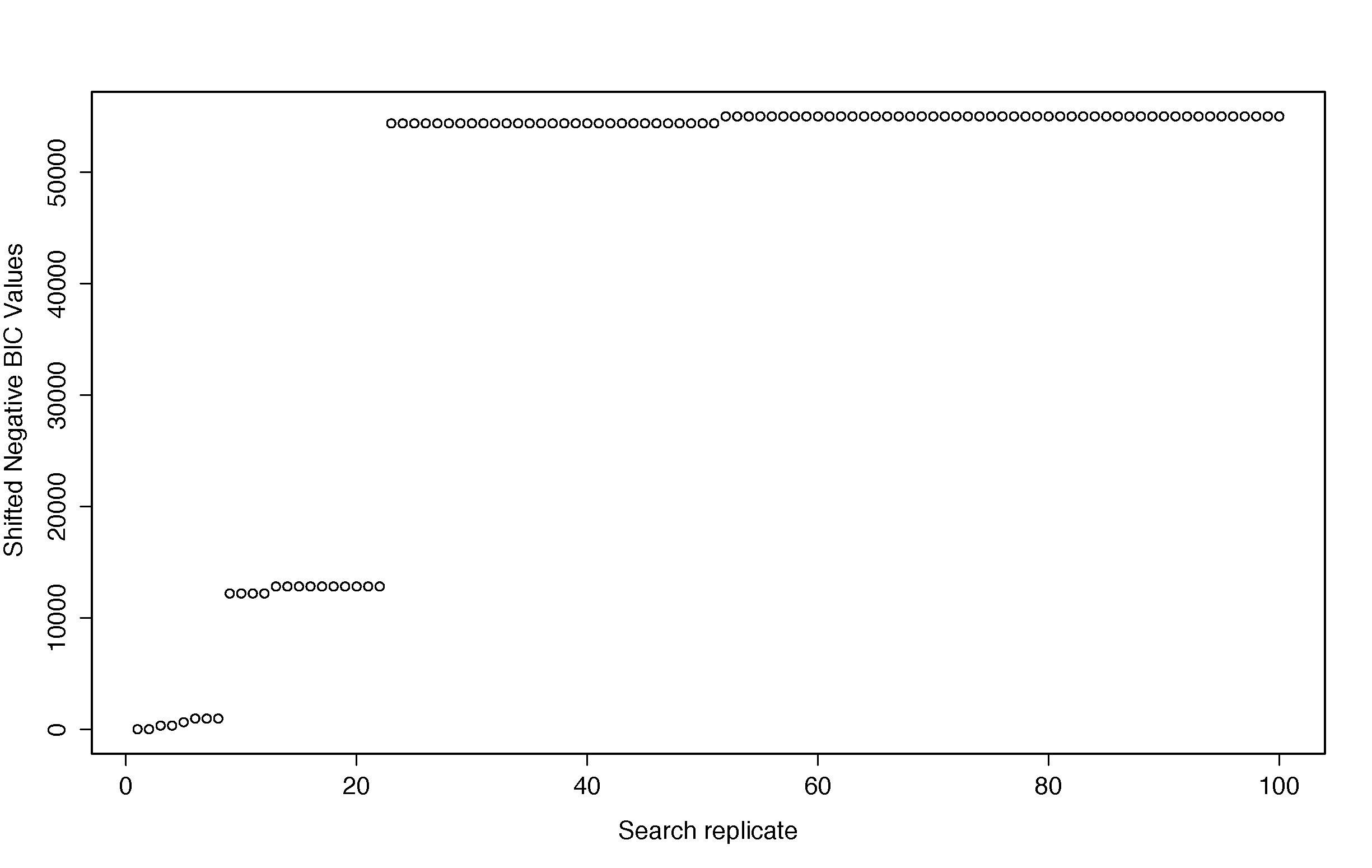}
}
\caption{\label{fig:generaconvergence}
For each run of the stochastic search algorithm, we recorded the smallest value of the BIC it identified. Smaller values (on the vertical axis) indicate better models and the eight models in a row at the bottom left have roughly the same worth in a BIC sense.  As shown in the Supplementary Materials Section \ref{sec:robsparse}, as sparsity decreases, we obtain smoother graphs than the step-function appearance seen here.}
\end{figure}

Now we have reduced the data to an 80-dimensional contingency table.  It has 377 cells with strictly positive counts. The largest positive count is 332,117 while the second largest is 11,614. Running the model selection procedure from Section \ref{sec:ourmeth} generates a series of clique loglinear models and the BIC value of each can be calculated. Figure \ref{fig:generaconvergence} shows the BIC values associated with the 100 best clique loglinear models identified by each of the 100 runs of length 100,000.  Some chains were trapped in local modes, but others found their way towards what appears be the best clique loglinear model in terms of lowest BIC. Of all the clique loglinear models visited, we found 1133 whose BICs were within $c=10^{-4}$ of the BIC of the best clique loglinear model identified across all 100 chains. As in Section \ref{sec:sims}, we used an Occam's window form of BMA limited to the best models visited while renormalizing \eqref{eq:targetdist} to reflect this. The clique structure of the best model is shown in Figure \ref{fig:bestgraphGenera}.

\begin{figure}
\centering
\centerline{
\includegraphics[width = 2.5in]{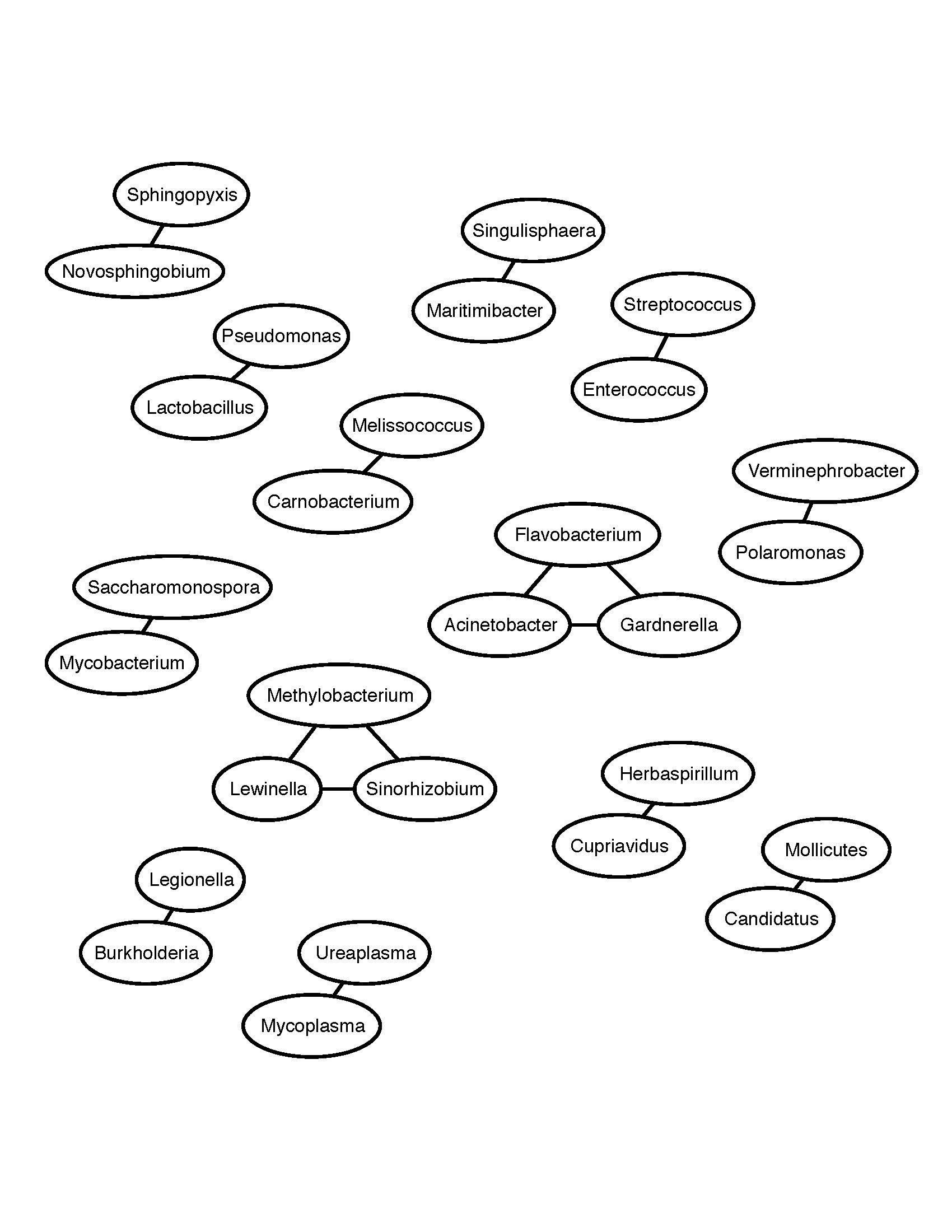}
}
\caption{\label{fig:bestgraphGenera}The best clique loglinear model identified for the genera data. Genera names are consistent with those in Figure \ref{fig:generaConnectivityNeighbors}.}
\end{figure}

We have also generated the independence graph resulting from the Occam's window BMA in Figure \ref{fig:bmagraphGenera}. In this graph, the strength of the connectivities between genera are indicated by different colors of lines.  These reflect ranges of posterior probabilities calculated from the Occam's window BMA probability using the models amongst the 1133 best models for which a given collection of higher order terms is present.  Observe that Figure \ref{fig:bmagraphGenera} does not have a clique structure because BMAs of clique loglinear models do not in general form another clique loglinear model.  More specifically, in both Figures  \ref{fig:bestgraphGenera} and \ref{fig:bmagraphGenera}, the edges correspond to pairs  of variables/genera that have strictly posterior probabilities of belonging to a collection of higher order terms in the 1133 clique loglinear models as evaluated by the posterior probabilities given by the Occam's window BMA. 

Several of the links in Figure \ref{fig:bmagraphGenera} are supported by biological findings regarding the vaginal  microbiome.  For instance, i) Ureaplasma and Mycoplasma are bacterial genera from the same  bacterial family and are commonly found in the reproductive tract of both men and women, ii) Polaromonas and Verminephrobacter, and Yersinia and Caldicellulosiruptor, are from the same bacterial family and have been validated by experimentation, and iii) Melissococcus and Carnobacterium are both main genera producers of bacteriocins, ribosomally synthesized antibacterial peptides/proteins that either kill or inhibit the growth of closely related bacteria and are considered antimicrobial microbes.

\begin{figure}
\centering
\centerline{
\includegraphics[width = 2in]{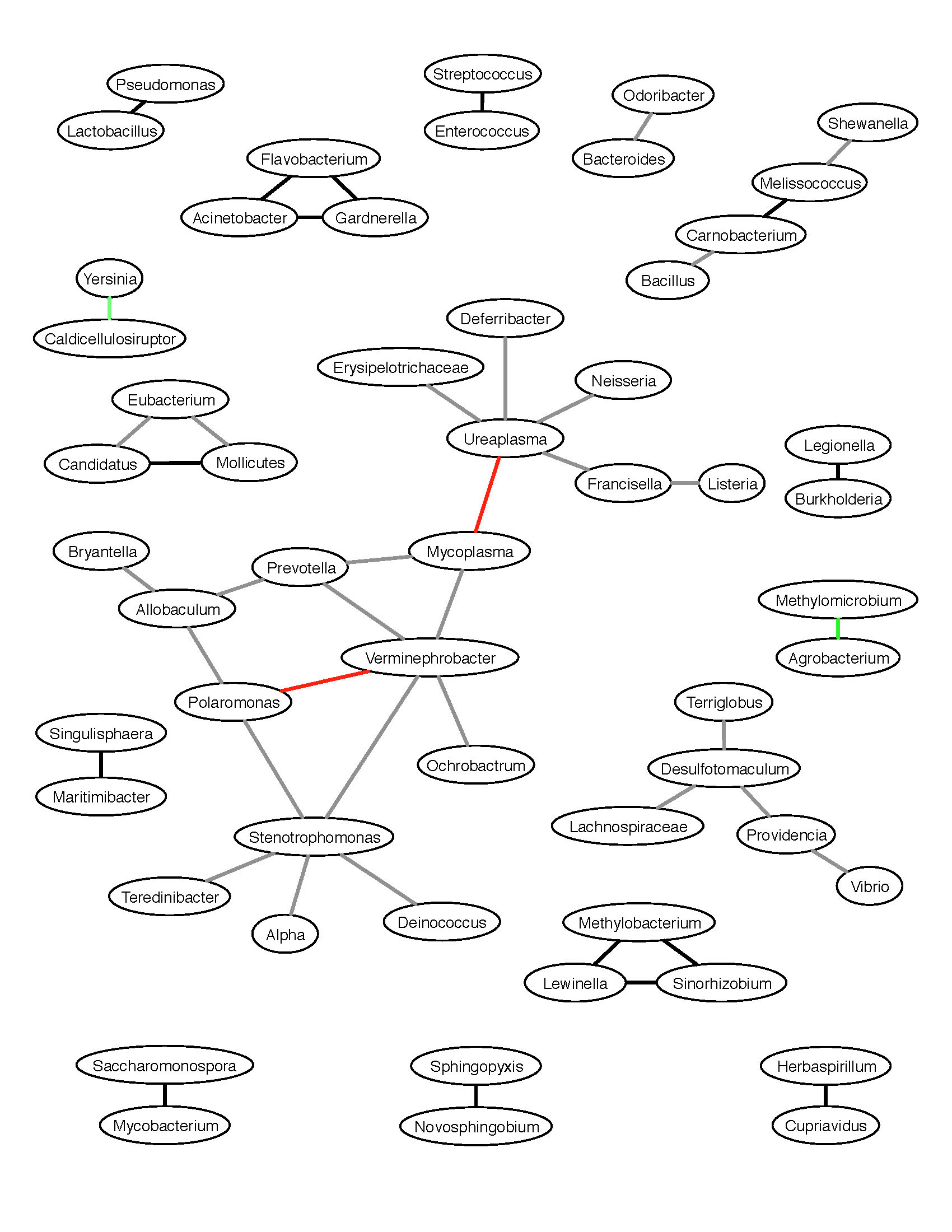}
}
\caption{\label{fig:bmagraphGenera}Pairs of variables that have strictly positive posterior probabilities of belonging to an higher order term of a clique loglinear model in the genera data under the
Occam's window BMA.   Black, red, green, and grey lines 
denote posterior probabilities that belong to the intervals $(0.9,1]$, $(0.5,0.9]$, $(0.1, 0.5]$, 
and $(0, 0.1]$, respectively. Genera names are consistent with those in Figure \ref{fig:generaConnectivityNeighbors}.}
\end{figure}

We also produced estimates of the probabilities in \eqref{eq:probunknown} and \eqref{eq:probshortsequence} which, from here on, will be referred to as individual existence probabilities. The BMA estimates of the top 5 individual existence probabilities are as follows:  Lactobacillus 0.86, ``Unknown'' 0.08, Pseudomonas 0.05, Acinobacter 0.01, and Gardnerella 0.002 (probabilities do not add to one because of rounding).  While we do not have standard errors for these estimates, it is obvious that Gardnerella is present in only trace amounts and the presence of an unidentified genus is not zero. We return to the interpretation of unidentified genera in Section \ref{sec:discussion}, but note that our findings are consistent with those from analyses of vaginal microbiome samples based on 16S rDNA sequence data that also identified the presence of previously unknown bacterial taxa \cite{fettweis-et-2012}.  

\section{Example: the diabetic foot wound microbiome}
\label{sec:footwound}

One of the complications of diabetes, particularly in elderly patients, is the development and impaired healing of foot ulcers. Diabetes is the primary cause of non-traumatic lower extremity amputations in the United States; approximately 14-24\% of patients with diabetes who develop a foot ulcer eventually require an amputation. A diabetic foot ulcer is an open sore or wound that occurs in about 15\% of patients with diabetes, usually on the bottom of the foot. 

It has been hypothesized that an altered skin microbiome may play a role in the compromised healing of diabetic foot ulcers \cite{smith-et-2016}. The purpose of this study was to investigate  the bacteria involved in chronic wound healing. Samples were taken from three locations -- the wound bed, the wound edge, and the peripheral healthy skin of the foot -- of 10 patients at two time  points, the time of initial visit and one week after the initial visit.  Half of the patients were considered healers, and the remaining patients were  considered nonhealers based on clinical assessment of their wounds.  Samples were prepared and submitted for V4 16S rRNA gene sequencing with the Illumina MiSeq platform, with services provided by Second Genome, Inc. Of the original samples, a  total of 50 provided reliable sequencing information. Sequences from these samples  that passed quality filtering were mapped to a set of representative consensus sequences  to generate an abundance table of Operational Taxonomic Units (OTUs); an OTU is simply a cluster of closely related reads. This table was  analyzed using an overdispersed Poisson model \cite{robinson-et-2010} to identify OTUs that were significantly differentially expressed between healers and  nonhealers at each of the three sample locations (FDR corrected p-value $<$ 0.05). The results consisted of three lists of significant OTUs, one for each location.  Further details on this can be found in the Supplementary Materials, Section \ref{sec:wound-supp}, that also extend our method to the comparison of two populations.

Our clique loglinear analysis is based on the counts of the number of sequencing reads assigned to significant OTUs for each sample, with a separate table for the significant OTUs from each location. This analysis is different from our previous example in that: i) our interest is on the association among samples that may be reflected in components of the microbiome, and ii) any associations will be based on sharing of OTUs across samples as opposed to sharing of reads across genomes. 

An initial exploratory data analysis using hierarchical clustering and principal components  analysis revealed that samples from the same subject cluster together, and that subjects cluster into two groups with patients 4 and 5 (healers) and patient 7 (nonhealer) forming a cluster distinct from the remaining subjects (see Appendix, Section \ref{sec:wound-supp}). For each of our three analyses (one for each significant OTU list) all samples could  form cliques with any of the other samples because all two-way marginals contain only 
nonzero counts.  We ran the stochastic search from Section \ref{sec:ourmeth} for 100,000 iterations from 100 random starts. Due to the smaller size of the table relative to the previous example (50 vs. 80 binary variables) we noted convergence to a best graph in less than 50,000 iterations. 

The strongest factor in clique formation is subject/patient origin followed by sample location; the distinction between healers and nonhealers is not evident despite the focus on significant OTUs. 

\begin{figure}
\centering
\centerline{
\includegraphics[width = 3.75in]{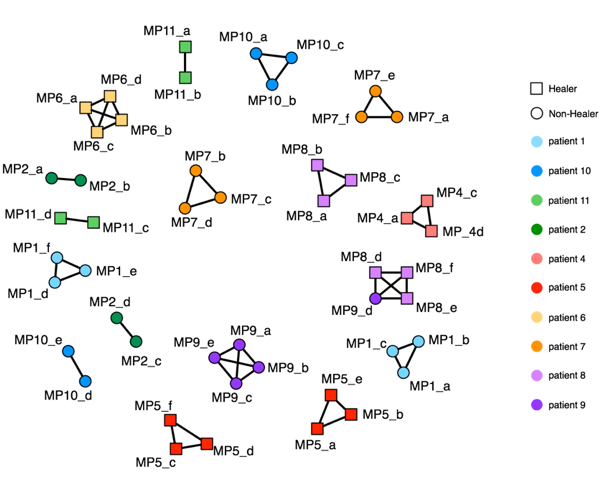}
}
\caption{\label{fig:bestGraphBed}The best clique loglinear model of samples from the wound microbiome based on significant OTUs in the wound bed between healers and nonhealers. MP = microbiome patient. The sample location is wound bed (a and b), wound edge (c and d), or healthy skin (e and f).}
\end{figure}

Although some cliques appear in all three best graphs, e.g., samples from the wound edge of patient 2, most cliques shift subtly with the changes in the significant OTUs; see 
Figure \ref{fig:bestGraphBed}. For example, all samples from patient 6 form a single clique in the best graph based on OTUs that are significantly different in the wound bed between healers and nonhealers. However, in the best graph based on significant OTUs in the wound edge, one of the wound bed samples from patient 6 forms a clique with the wound bed samples from patient 11, while the remaining samples from patient 6 maintain a clique. Not surprisingly, the cliques involving samples from patient 6 change again in the best graph based on significant OTUs in healthy samples (note that patient 6 has no healthy samples). As expected from our exploratory analysis, samples from patients 4, 5, and 7 formed cliques only among themselves and not with samples from other patients. 

Thus, we have analyzed the data to identify the OTUs and to search for associations among samples. To the best of our knowledge, the data in this example have not been analyzed in any way analogous to our methodology before, either  biologically or statistically. Thus, we are unable to corroborate our findings from other sources although a priori our findings do not appear unreasonable. This example demonstrates that our methodology has the potential to obviate a lot of expensive microbiological work.

\section{Discussion}
\label{sec:discussion}

We have developed a statistical methodology for contingency table analysis based on clique loglinear models. This methodology can be used in the context of high dimensional tables and it accounts for model uncertainty by model averaging. Our methodology can infer the presence/absence of specific taxa as well as associations among members of taxa. We have demonstrated our approach in both simulated and real data contexts relevant to applications to metagenomics.

An issue that repeatedly comes up in this sort of analysis is how to account for the dependence among reads.  As noted in Section \ref{sec:intro}, this dependence is frequently small -- and in our experience, the higher the quality of the data the smaller the dependence among reads is, though it is never zero. As a first approximation, therefore, assuming independence is reasonable and parallels the the bag-of-words approach that has been applied with much success in natural language processing.  In recent years this has been improved to random `$N$-grams' and an analogous improvement may be possible with NGS reads. Of more immediate relevance, the-bag-of-words model has been applied to several branches of bioinformatics with success \cite{lovato-2015}. It is important to distinguish between genuine associations among taxa and simply having reads in common for some other reason  -- syntrophy or evolution for example -- something our methodology does not address. However, this level of study remains in its infancy. 

An important question is to ask how much information is really in the reads.  In this context, one can ask if there is adequate read converge to infer reliably which genomes are present in the sample and hence in the population.  Obviously, this is a function of the number of reads, diversity within the sample, and complexity of the population. This is not a question that can be addressed statistically after the reads have been generated, although in principle it could be partially addressed at the design stage of the read generation. Read coverage will typically be incomplete and typically will be a limitation on analytic methods. This may increase the uncertainty of downstream inferences but the task is to reflect uncertainty accurately not to under-represent it.  Methodologies that compensate for uncertainty or evaluate uncertainty by, say, robustness criteria, remain to be developed.

More specifically, when reads are shared by two taxa they only mean that the two taxa are similar in the regions that were sampled.  Strictly speaking this does not tell us anything about the co-existence of the two taxa.  However, first, if shared reads from two taxa are found we have ruled out the case that neither taxon is present. Moreover, if we have reads present that are unique to one taxon then we have established its presence.  If we have reads that are unique to the other taxon then we have established its presence.  Finally, if we have reads that are unique to each of the taxa we have unambiguously identified both taxa are present.  We dropped reads that are unique to one taxon they were the singleton vertices in the connectivity graphsâ so we could focus on higher order terms that represented two reads. 

We comment that in our wound microbiome example, unlike the HMP example, our analysis may have failed to capture all of the information in the available data as the OTU table was converted from counts (i.e., each cell gives the {\it number} of reads that align to a given OTU in a given sample) to binary (i.e., each cell indicates if {\it any} reads align to a given OTU in a given sample) prior to analysis. An extension of our method to raw data consisting of counts, e.g., with each subject $S_{j}$, $1\le j\le b$, we associate a categorical random variable $X_{j}$ that takes value $k$ if $k$ sampled reads align to OTU $i$, $1 \le i \le k$ and takes value $0$ otherwise, is a topic for future research.   

On the other hand our method extends to comparing two populations on the basis of their connectivity. The difference in dependencies can be regarded as indicators of which associations are present or absent in the normal case (say) versus the diseased case.  This amounts to looking at the different structure of the graphs and interpreting what the cliques mean in terms of reads. Our treatment in Section \ref{sec:footwound} was subject-by-subject. In Section \ref{sec:wound-supp} of the Appendix we compare two collections of metagenomic samples from two populations, healers and nonhealers.  

Our inference of the significant presence/absence of bacterial taxa, possibly unknown, is based on posterior estimates of probabilities \eqref{eq:probunknown} and \eqref{eq:probshortsequence}. We refer to these as existence probabilities; however, this terminology belies the subtleties regarding their interpretation. These probabilities are estimated from the model averaged joint posterior distribution, and hence are conditional on the best BIC models, i.e., an Occam's window approach. If a bacterial taxon (say, genus) to which reads uniquely align does not appear in any of these models, these reads will impact our probability estimates. For example, the estimate of the probability of an unidentified taxon will be inflated, while the estimate of the probability of presence of a taxon appearing in the model average will be deflated. The extent of this impact may be small, as any taxon to which many reads align should appear in the model average, but this cannot be guaranteed and warrants further study.  

Otherwise put, the category of genomes we have called unidentified may only be an artifact of the modeling.  Indeed, if the genome list contains all the genomes in the sample, the probability of an unidentified genome is simply a residual reflecting the short reads that do not align in sufficiently large numbers to any genome in the models in the model average.  On the other hand, if the genome list is incomplete, the probability of an unidentified genome is the sum of two parts:  the probability of a genome we know but that was not included amongst the $B$ reference genomes plus the probability of something that we have not encountered before.

\section*{Acknowledgment}

This work was supported in part by the National Science Foundation through grants DMS/MPS-1737746 and DMS-1120255 to University of Washington, and grants DMS-1410771 and DMS-1419754 to University of Nebraska-Lincoln. 

\appendix

\section{Data preparation}
\label{sec:data}

The bacterial genomes data used in our experiments were obtained from the Human Microbiome Project (HMP) \cite{nih-et-2009}, and the Joint Genome Institute's (JGI) Integrated Microbial Genomes (IMG) database \cite{markowitz-et-2014}, version 4.0.

\subsection{Bacterial genomes}
A number of 456,865 genomic bacterial reference sequences, in FASTA format, were procured from the Integrated Microbial Genomes and Metagenomes (IMG, version 4.0) database \cite{markowitz-et-2014}.  The 456,865 reference sequences accounted for 5,168 bacterial genomes (at strain level) which included sequences from bacterial genomes and bacterial plasmids.  The 5,168 genomic references were isolated by relying on bacterial taxon names and identifiers obtained from the Genome Browser at the IMG website:

{\tiny \url{https://img.jgi.doe.gov/cgi-bin/w/main.cgi?\\section=TaxonList&page=taxonListAlpha&domain=Bacteria}}

\subsection{Metagenomic samples}
A human metagenomic sample was obtained from the Human Microbiome Project \cite{nih-et-2009}.  The human metagenome sample SRS015072, obtained from a female participant of the HMP Core Microbiome Sampling Protocol A (HMP-A) dbGaP study, was downloaded from the HMP FTP site.  The sample consisted of  495,256 paired-end, 100 bp reads (with an average mate-distance of 81bp) in Illumina FASTQ format \cite{cock-et-2010}.

\subsection{Reference sequence alignment}
The HMP dataset (SRS015072) was aligned to the 456,865 bacterial references using the Bowtie2 \cite{langmead-salzberg-2012} aligner.  Bowtie2 requires that the reference sequences be indexed so that the reads can be efficiently aligned. The bacterial genomic references were prepared for alignment using the \texttt{bowtie2-build} indexer program.  The indexer program was run with default values along with the ``-f'' flag (FASTA sequences). Once the reference sequence index had been built, Bowtie2 used in local-alignment mode, was used to align the HMP data using the following command:

{\small
\begin{verbatim}
bowtie2 --local -D 20 -R 3 -N 0 -L 20 -i S,1,0.50 --time -f -x -S
\end{verbatim}
}

Samtools \cite{li-et-2009} (version 0.1.18) was then used to parse the alignments.

\subsection{Whole genome sequencing vs. 16S sequencing}
\label{sec:sequencing}

In the example from Section \ref{sec:analyses}, the whole genome sequencing (WGS) data were used, while in the example from Section \ref{sec:footwound}, the 15S sequencing data were used.  Both datasets can be seen as special cases of the whole metagenome sequencing data our methodology focuses on \--- see Section \ref{sec:intro}.

In WGS, sequencing libraries are prepared from the extracted whole-DNA sequences of bacteria in the sample to be analyzed. The resulting sequencing short-reads consequently represent the putative DNA sequences of the bacterial populations in the sample \cite{thoendel-et-2016,hasman-et-2014}. In contrast, 16S rRNA sequencing libraries are prepared from the sequences of the highly conserved 16S ribosomal gene. The reads from 16S sequencing represent the sequences of the 16S gene in the bacterial populations in the sample. In downstream analyses, WGS data are analyzed at the sequencing-read and genome levels, while 16S reads are assembled into clusters of reads called Operational Taxonomic Units (OTU), and analyzed as abstract representations of taxonomic groups \cite{nguyen-et-2016,charuvaka-rangwala-2011}. WGS analyses have the advantage that their analysis resolutions tend to be higher than 16S when detecting bacterial genomes in a sample they can detect bacterial organisms at low taxonomic levels.  WGS capture a larger region of a bacterium's genome than 16S, and are able to detect more given the appropriate depth and breadth of sequencing coverage, e.g. how much of a given bacterium's genome is covered by the sequencing library (breadth), and how many sequences sample a given region (depth) \cite{ranjan-et-2016}. The 16S analysis is a well-known procedure with mature and well-developed analysis pipelines, as well as a large collection of publicly available datasets. The 16S sequencing can also achieve the same levels of classification as WGS, but at a  cheaper cost and with smaller sequencing libraries which can have downstream benefits as their  data footprints are more compact, resulting in cost effective, well-understood analysis pipelines.

\section{Some methodological and computational details}

There are a variety of methodological and computational questions that arise in complex, big data analyses such as the ones we present in this article. This section is intended to address several of the most important ones.

\subsection{From loglinear models to clique loglinear models}
\label{sec:nested}

A loglinear model arises from using an ANOVA model for the log expected frequencies in a contingency table.  As is well known, the number of terms in a complete 
ANOVA model is exponential in the number of factors.  So, to be useful, we must have a way to reduce the number of models considered.  One way to do this is to restrict the class of loglinear models. The nested subclasses are shown in Figure \ref{fig:loglin}. The first restriction we impose is that the loglinear models be hierarchical. A loglinear model is hierarchical if and only if the presence of a higher-order interaction term requires the presence of any or all of its lower-order interaction terms \cite{bishop-et-2007}.  The model may not be identifiable without imposing some constraints on the interaction terms \cite{agresti-1990}. 

\begin{figure}
\centering
\centerline{
\includegraphics[width = 3in]{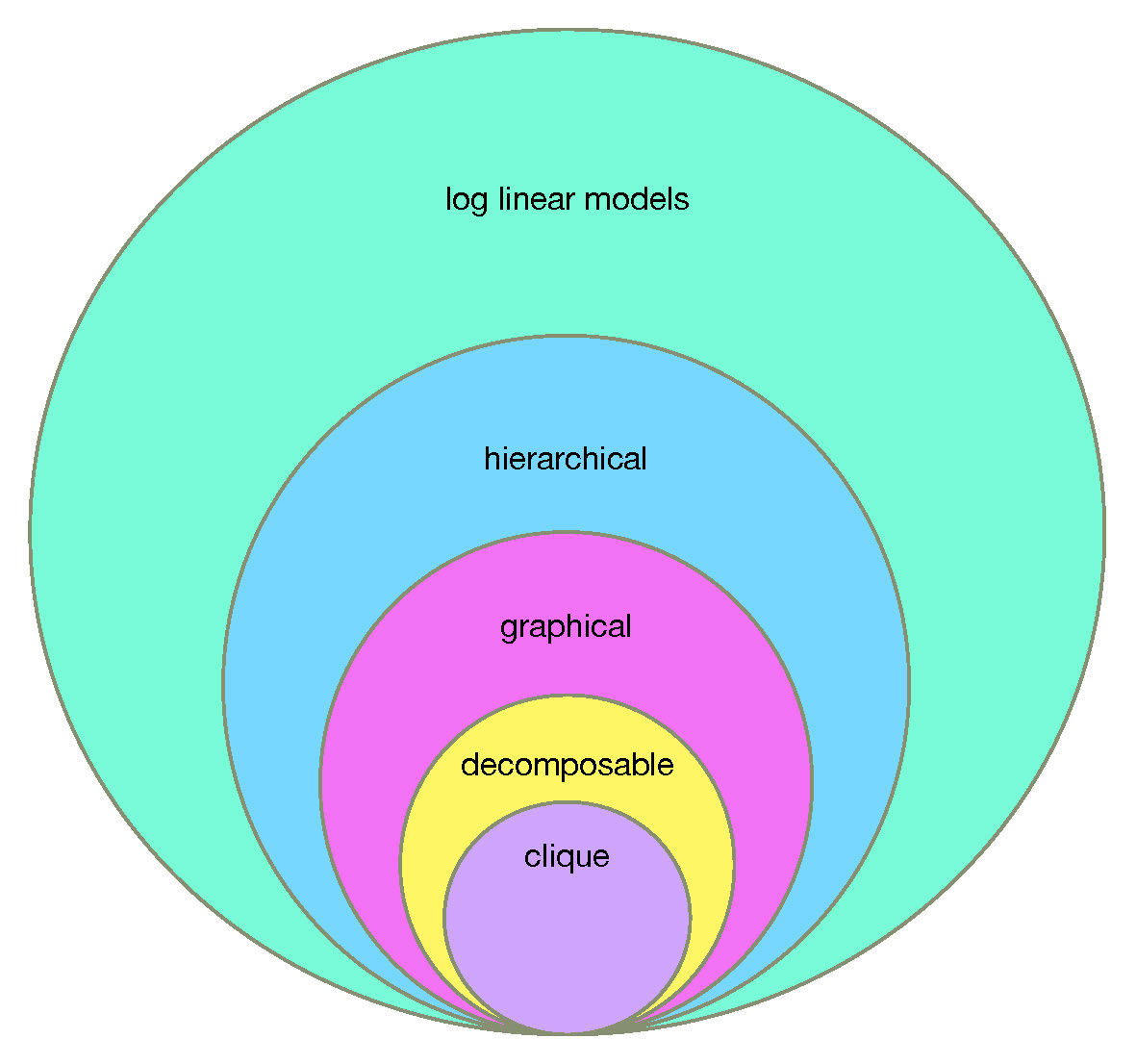}
}
\caption{\label{fig:loglin}Nested subclasses of loglinear models.  We propose using the smallest of these -- clique loglinear models --  for sparse high-dimensional contingency tables.}
\end{figure} 

The second restriction is to hierarchical loglinear models that are also graphical. The idea is that an undirected graph $G$, with vertices $\mathcal{B}$ and edges $E$, can be associated to any hierarchical loglinear model. These graphs are usually called independence graphs \cite{whittaker-1990}. Specifically, an edge $e=(b_{1},b_{2})$ appears in $G$ if and only if the variables $X_{b_{1}}$ and $X_{b_{2}}$ appear together in an interaction term.  A hierarchical loglinear model is graphical if and only if the subsets of $\mathcal{B}$ that are the vertices of the complete subgraphs of $G$ that are maximal with respect to inclusion, are also maximal interaction terms in the loglinear model \cite{whittaker-1990}.  If a model is graphical, the absence of an edge $e=(b_{1},b_{2})$ in $G$ is  equivalent with the conditional independence of variables $X_{b_{1}}$ and $X_{b_{2}}$ given the rest of the variables under the joint distribution for $X_{\mathcal{B}}$.  Moreover, if there is no path in $G$ from vertex $b_{1}$ to vertex $b_{2}$, then $b_{1}$ and $b_{2}$ are in two distinct and fully connected components of $G$, and the corresponding variables $X_{b_{1}}$ and $X_{b_{2}}$ are independent. 

The usual notation for this is to abbreviate the components in $\mathbf{X}$ to their indices $b\in \mathcal{B}$,
and indicate loglinear models in terms of sets of indices that show the clique structure of $G$.   For example, consider the loglinear models $\M_{1}$ and $\M_{2}$ indicated
by their generators, i.e., generated by the noted collections of sets of indices, ${\mathcal{C}}(\M_{1}) = \{ \{1,2,3\},\{1,3,4\}\}$  and ${\mathcal{C}}(\M_{2}) = \{ \{1,2\},\{2,3\},\{1,3\},\{3,4\},\{1,4\}\}$. The graphs associated with $\M_{1}$ and $\M_{2}$  are identical, and both have edges as given in ${\mathcal{C}}(\M_{2})$. However, the graph $G$ for $\M_{1}$ is generated by two cliques $\{1,2,3\}$ and $\{1,3,4\}$  that include the third order interaction terms for $(1,2,3)$ and  $(1,3,4)$, whereas the list of edges $\M_{2}$ forms a graph $G^\prime$ that does not include the third order interaction terms for $\{1,2,3\}$ and $\{1,3,4\}$. Indeed, even if the edges are regarded as cliques of size two, the third order interaction terms necessary for a graphical model as indicated by the cliques in its graph are not included. Thus $\M_{1}$ is a graphical model, but $\M_{2}$ is not a graphical model.

\begin{figure}
\centering
\centerline{
\includegraphics[width = 3in]{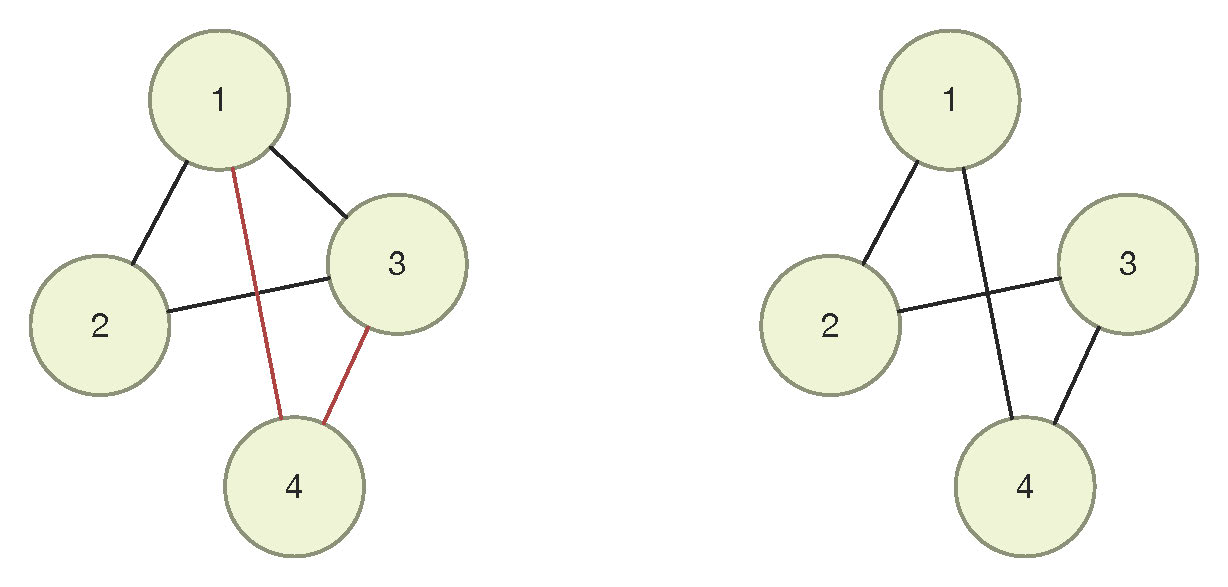}
}
\caption{\label{fig:graph_mod}Graphical loglinear models $\M_{1}$ (left) and $\M_{3}$ (right). $\M_{1}$ is decomposable while $\M_{3}$ is not.}
\end{figure}

The third restriction is to graphical loglinear models that are decomposable \cite{lauritzen-1996}. This idea is that a graphical model is decomposable if its
graph can be broken into components that are cliques without losing any information. 
For example, Figure \ref{fig:graph_mod} shows the graphical loglinear model
$\M_{1}$ and a second model $\M_{3}$. 
$\M_{1}$ is generated by ${\mathcal{C}}(\M_{1}) = \{ \{1,2,3\},\{1,3,4\}\}$ and is decomposable; its  
graph has two cliques $\{1,2,3\}$ and $\{1,3,4\}$, and one separator $S_{2}=\{1,3\}$. Separators are the intersections of cliques and have important properties not discussed here, but see \cite{lauritzen-1996}. By contrast, $\M_{3}$ is generated by ${\mathcal{C}}(\M_{3}) = \{ \{1,2\},\{2,3\},\{3,4\},\{1,4\}\}$ and is graphical.  However, it is not decomposable.  Essentially, decomposable models are graphical models for which closed form MLEs exist.  In terms of their graphs, this means they are triangulated, that is their chordless cycles contain no more than three vertices. The equivalence of these two formulations is beyond the scope of this brief review; see \cite{lauritzen-1996} for details.  One of the key advantages of decomposable graphical models is that testing the existence of MLEs is straightforward and computationally less expensive even when $B$ is large.

\subsection{Is this a true Bayes approach?}
\label{sec:reallybayes}

The analysis here is not obviously fully Bayes.  The lack of Bayesian-ness, however, is
in the Occam's windowing of the posterior  (which is data dependent) not in the prior selection.
Even though not strictly Bayesian, Occam's window is a standard technique in cases such as ours where the posterior is difficult to explore.  It allows us to focus on the most important regions of the model space. The argument for using is purely empirical \cite{madigan-raftery-1994}, and the technique was developed in response to loglinear models.  Three more points are relevant.

First, BIC with MLEs is actually equivalent to using the simplest of priors (uniform) for which 
the posterior mode (optimal under zero-one loss) equals the MLE.  Other priors could have 
been used -- at the cost of making an already computationally demanding approach more so.  
For instance, there are priors that assign weights to models in ways related to their number of terms, usually larger weights on smaller models.  This would have necessitated a full blown  stochastic search over the model space and massively increased running time.  Of course, unless we have lots of pre-experimental information we are sure is correct we run the risk of ending up with inappropriately prior-driven results since our model space, although reduced, is so large.  We have defaulted to the uniform prior for convenience, hoping that the asymptotic results discussed below will suffice.

Second, outside the uniform prior, it is well known that many Bayes estimators
and MLEs are very closely related.  For instance, if $\hat{\theta}$ is an MLE and $\hat{\theta}_p$ is the mean of the posterior, it is not hard to prove that $\sqrt{n}(\hat{\theta} - \hat{\theta}_p) \rightarrow 0$ in distribution even as $\sqrt{n}(\hat{\theta} - \theta_0)$ and  $\sqrt{n}(\hat{\theta} - \theta_0) $ go to a normal distribution. That is, the MLE and posterior mean are closer to each other than either is to the true value $\theta_0$.

Third, asymptotically approximating the posterior using the BIC is a very well-established 
technique. The most recent important reference on it seems to be \cite{berger-et-2003}, and they give some history of that type of approximation.  Since the prior only contributes to the second order approximations to the posterior (say ${\mathcal{O}}(1/n)$) using BIC with MLEs eliminates the need for prior specification when $n$ is large enough in well behaved cases, e.g., ignoring model misspecification issues (but see \cite{berk-1967} for the standard way such issues are still handled in model selection), non-identifiability or dilution \cite{george-2010}.

\subsection{Acceptance rates in an Metropolis-Hastings (MH) procedure}
\label{sec:whatabout}

Our stochastic search uses the mechanism of Metropolis-Hastings (MH) \cite{madigan-york-1995}. The MH acceptance rates will primarily be a function of the richness of the model space -- which is largely a subjective choice -- not an indicator of whether a satisfactory collection of good models has been found although the two may be related. Indeed, there is a bias-variance tradeoff on the level of model lists:  too big a model list can give excessive variance; to small a model list can give bias. Finding a model list that achieves an optimal variance bias tradeoff is very much an open question.

\begin{figure}
\centering
\centerline{
\includegraphics[width = 5in, height= 3in]{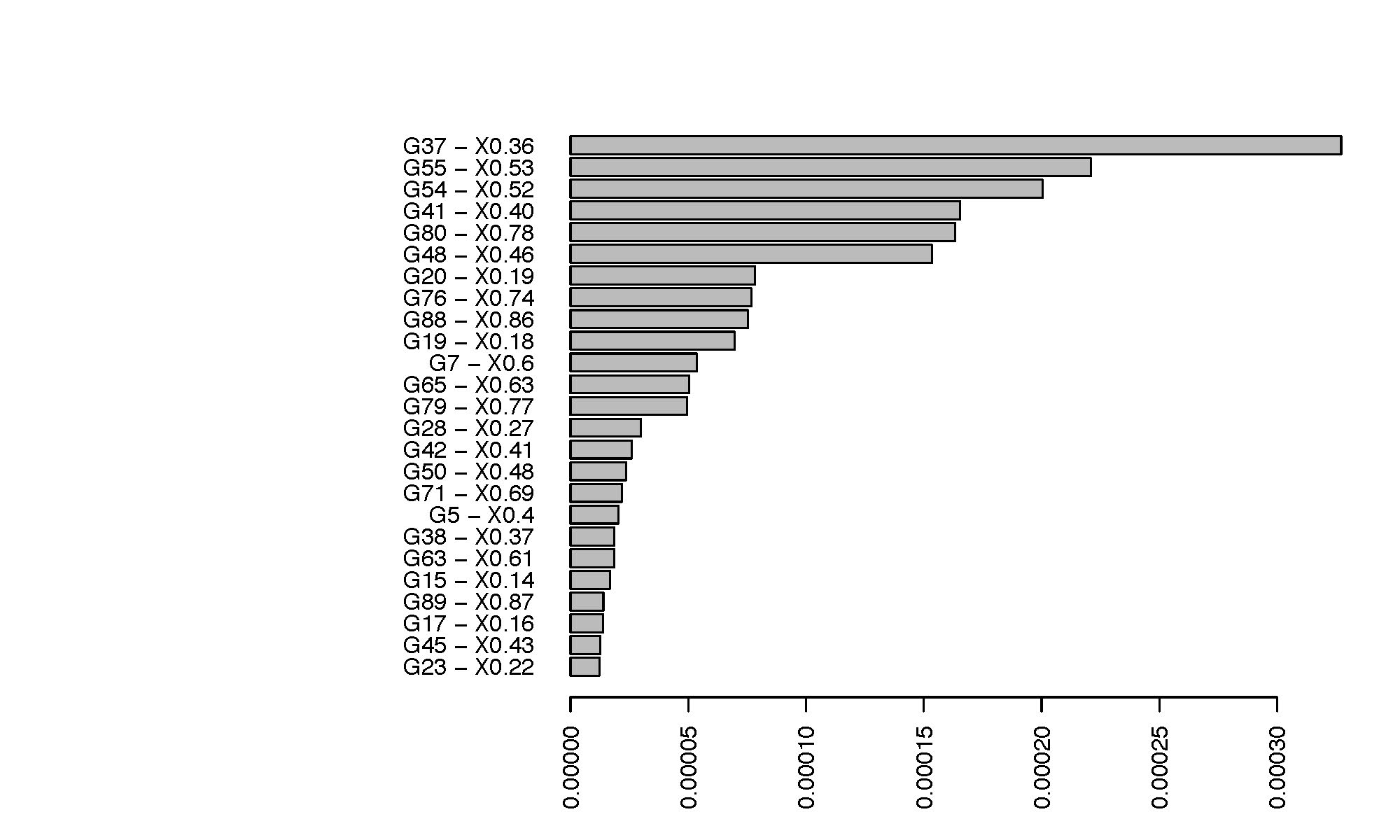}
}
\caption{Bar graph showing the approximate individual posterior probabilities of
genera.  The labels on the left represent a coding of the genera.}
\label{finalprobs}
\end{figure}

With that in mind, we note that the top four clique loglinear models for genera from Section \ref{sec:analyses} represent a very high -- around 98-99\% -- of the posterior probability.  Thus, as a verification that the Occam's window is not loosing too much of the posterior probability, we note that Figure \ref{finalprobs} shows the 5-th through 29-th individual posterior probabilities of the various genera.  The labeling on the y-axis is just a coding we used for the genera and is not of concern.  What is of concern is that with the 1133 models, the top 4 individual probabilities have something like 99\% of the posterior probability and the probability of something unknown is around .08. This is also discussed in Section \ref{sec:discussion}. Figure \ref{finalprobs} simply shows that including more models in the Occam's window BMA would not make much difference.

Thus, although we have not looked at acceptance probabilities, our methodology has generated a
collection of models that captures the vast majority of the posterior probability so the Occam's windowing 
is not losing too much information.  Otherwise put, it seems as though the 1133 models resulting
from our search procedure provides a reasonable approximation to the actual posterior. In our view acceptance probabilities are only interesting if one believes the model list is physically meaningful and issues of sparsity are not relevant. Indeed, sparsity will tend to force bigger jumps (when they occur) leading to an erratic pattern of acceptance rates where non-sparsity will tend to be continuous. The relationship between sparsity and acceptance rates -- while interesting -- has not been well investigated as far as we know and is beyond our present scope.

\subsection{Skewed data and the BIC}
\label{sec:skewed}

The BIC is not an approximation to the posterior:  it is an approximation to the mode of the posterior and the location of the mode indicates a good model (and arguably a good parameter value). The detailed behavior of the BIC (or any information based model selection criterion) is a general problem that will not likely be resolved in our lifetimes and we are not sure how to verify conclusively that the BIC convergence our method needs to hold since the true models are unknown (and quite possibly unknowable given the dynamic nature of organisms).

Here, then, is the state of play: the BIC converges for most well-behaved distributions with sample sizes not too different from those required for analogous convergences for the 
CLT.  After all, posterior convergence (when conditioning on all the data) is much at one with the frequentist CLT.  So, as a generality, for some linear regression type models that are 
similar to loglinear models you want around 30 data points/parameter to be assured of good convergence. This assumes the models on the model list are not too similar. We have a 
high dimensional contingency table with $2^{\#({\sf genomes)}}$ where $\#({\sf genomes})$ is in the thousands. Thus, in our first real world example (see Section \ref{sec:analyses}) we used 95 genera.  As noted there were 377 cells with strictly positive counts.  Loosely, therefore, we have 377 sets of parameters (the cardinality of the sets of parameters being due to the hierarchical structure of the loglinear models). In terms of cell counts, the largest was 332k and the second largest was 11.6k and there are a lot that are very small \--- see Figure \ref{poscellcts}.  Convergence will be determined by the total cell count which in turn is largely determined by the depth of sequencing.

\begin{figure}
\centering
\centerline{
\includegraphics[width = 5in, height= 2.5in]{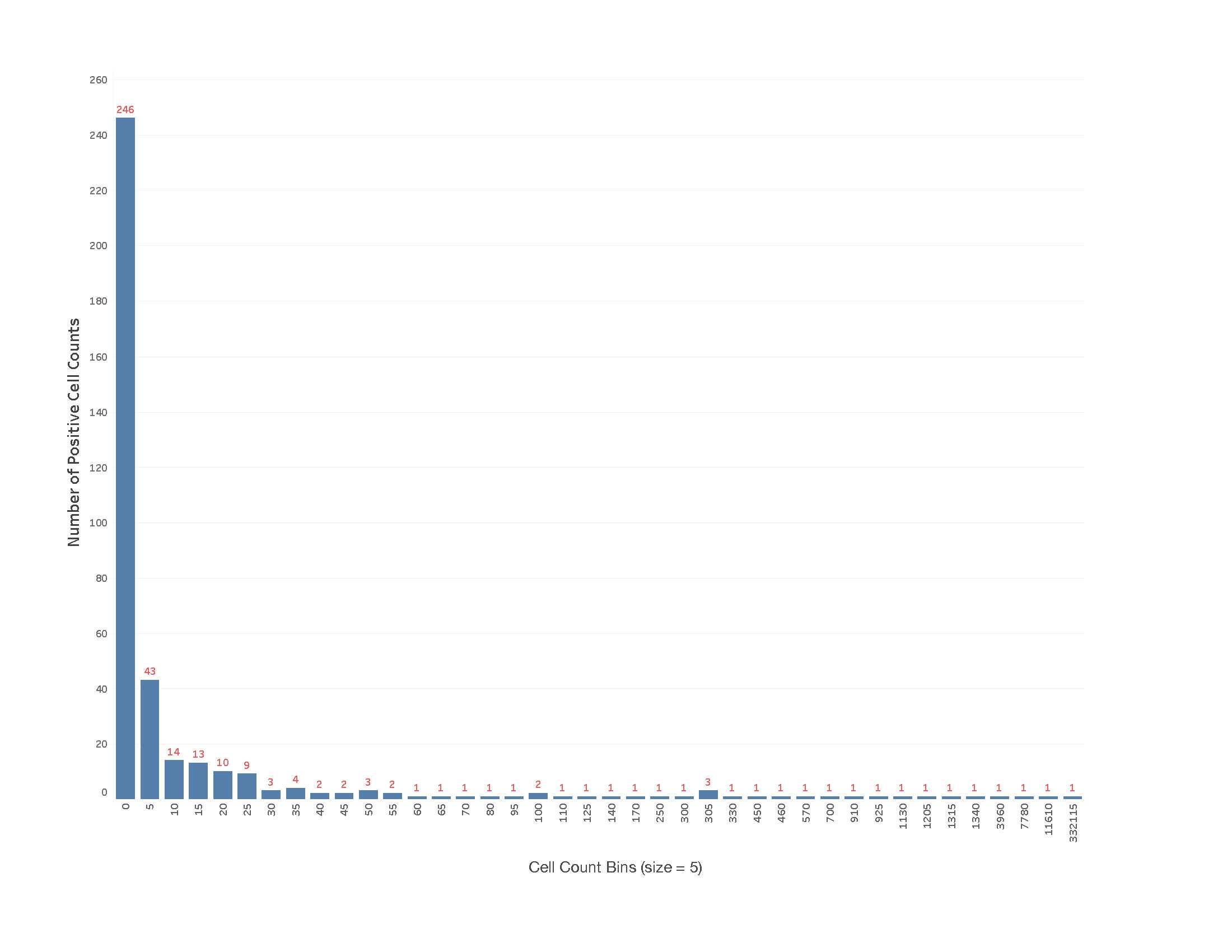}
}
\caption{Histogram of the positive cell counts for the HMP data.}
\label{poscellcts}
\end{figure}

It is obvious that the positive cell counts are unbalanced and strongly skewed; this is typical for high dimensional contingency tables and indicates sparsity. It is also well known that loglinear models might not be able to capture the unbalancedness well because the expected cell counts under a loglinear model must be strictly positive for all the cells including the ones with an observed count of zero. Since the grand total of the table of expected counts must equal the grand total of the observed table, the total sum of counts gets redistributed to the many cells with an observed count of zero. Hence what you see is that the expected counts for the cells with the largest observed counts are much smaller than those observed counts. 

As such, loglinear models might fail to properly capture the imbalance of the cell counts. However, capturing the magnitude of the cell counts is not our end goal. Instead, our target is the determination of the interaction structure that exists among the variables in the data. Thus, the sharp skewness in Figure \ref{poscellcts} has nothing to do with the BIC per se,
i.e., with the way we select loglinear models. Informally, the $n$ in the BIC is the grand total of the observed table. Since the sum of the counts is in the hundreds of thousands, we hope that even given the unbalancedness of the cell counts, the BIC will perform well. \cite{schwarz-1978} established an optimality property of the BIC in terms of hypothesis testing, i.e., decision making, so it is not at all clear that there is a better choice for a model selection principle. Moreover, to give an indication of the scale of how well BIC works, \cite{lv-liu-2014} examine the use of the BIC in a model mis-specification setting involving logistic regression, and show the BIC works reasonably well even when $n=200$ and $p=1000$.

The strong, clear connectivities are likely the ones where the BIC is giving  good approximation.  This is indicated by the higher posterior probabilities in Figure \ref{fig:generaconvergence}. The other connections may be weaker due to the connectivity being genuinely weaker or due to the BIC giving a poor approximation to the posterior mode.  Thus, our method identifies connections that we are pretty sure are present given the data.  In addition, our method suggests other connections may be worth exploring. Indeed, our example from Section \ref{sec:footwound} which is further developed in Section \ref{sec:wound-supp} shows that our methodology can potentially save a lot of expensive lab work.

\subsection{Robustness and sparsity}
\label{sec:robsparse}

Statistical analyses of the sort we are proposing have the right sort of robustness to be credible. In our view, the use of the BMA increases robustness and the restriction to clique
loglinear provides the required robustness.  The lack of robustness is due to sparsity and this is typical with sparse methods.

To see what happens we did simulations. In the connectivity matrix for the simulation example from Section \ref{sec:sims}, we changed 0s to 1s using a Bernoulli with $p= .01, .05, .1$, and looked at the downstream effects on the models and the resulting graphs from the BMAs. We did this for 1000 reads total, as in the original simulation example, and for 5000 reads total. Figure \ref{MCMCoutput} is the analog of Figure \ref{fig:generaconvergence} for the performance of our stochastic search algorithm.  As can be seen, as the number of 1s increases, the steps smooth out. This holds for 1000 total reads and 5000 total reads. This suggests that as sparsity decreases a single model becomes more and more reasonable.  We suspect that this is an artifact of simply having more reads and hence more interaction terms. In the limit, all possible interactions terms will be present and this is not helpful.

\begin{figure}
\begin{center}
\begin{tabular}{cc}
\includegraphics[height = 3.5cm]{convergenceGenera.jpg} &
\includegraphics[height = 3cm]{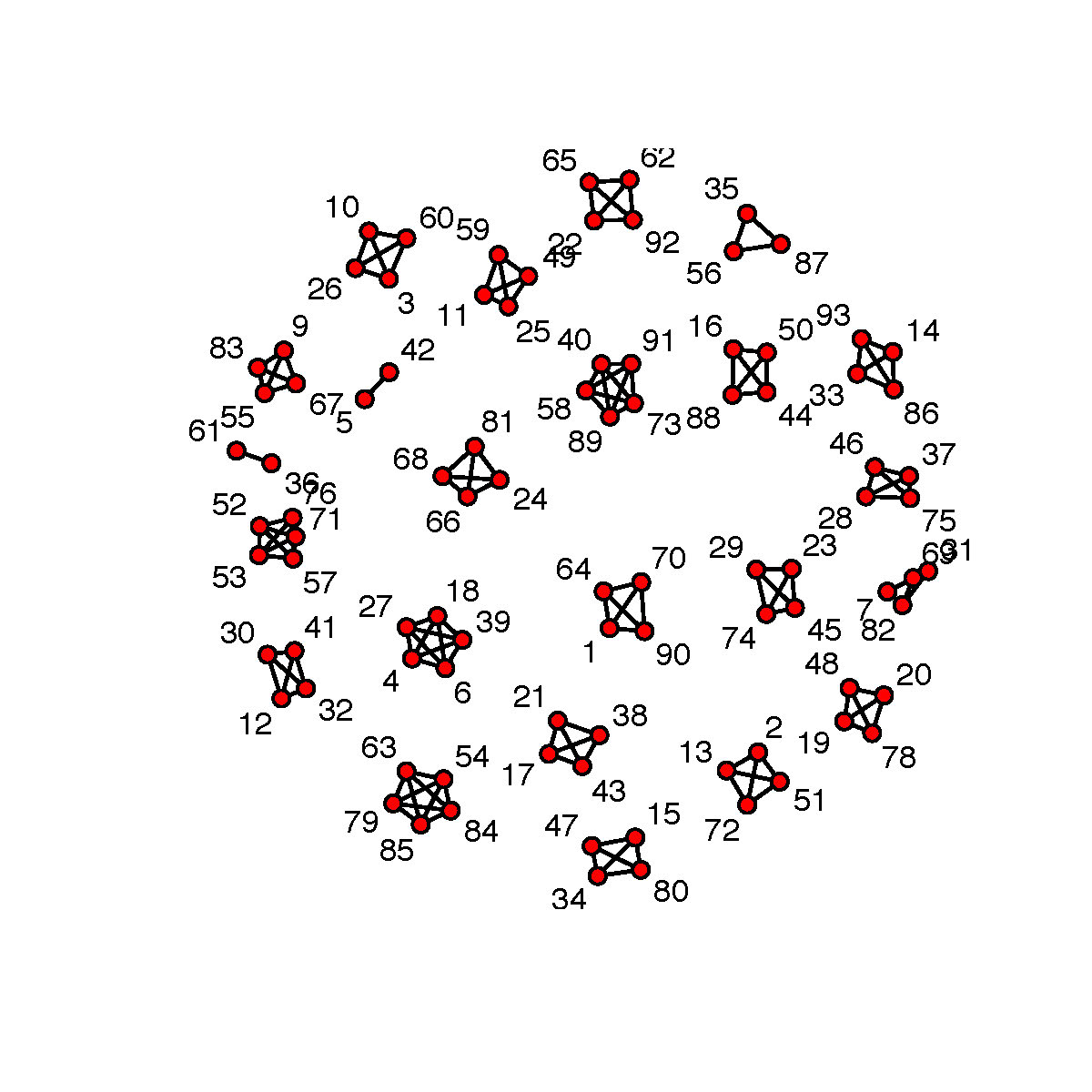} \\
\includegraphics[height = 3cm]{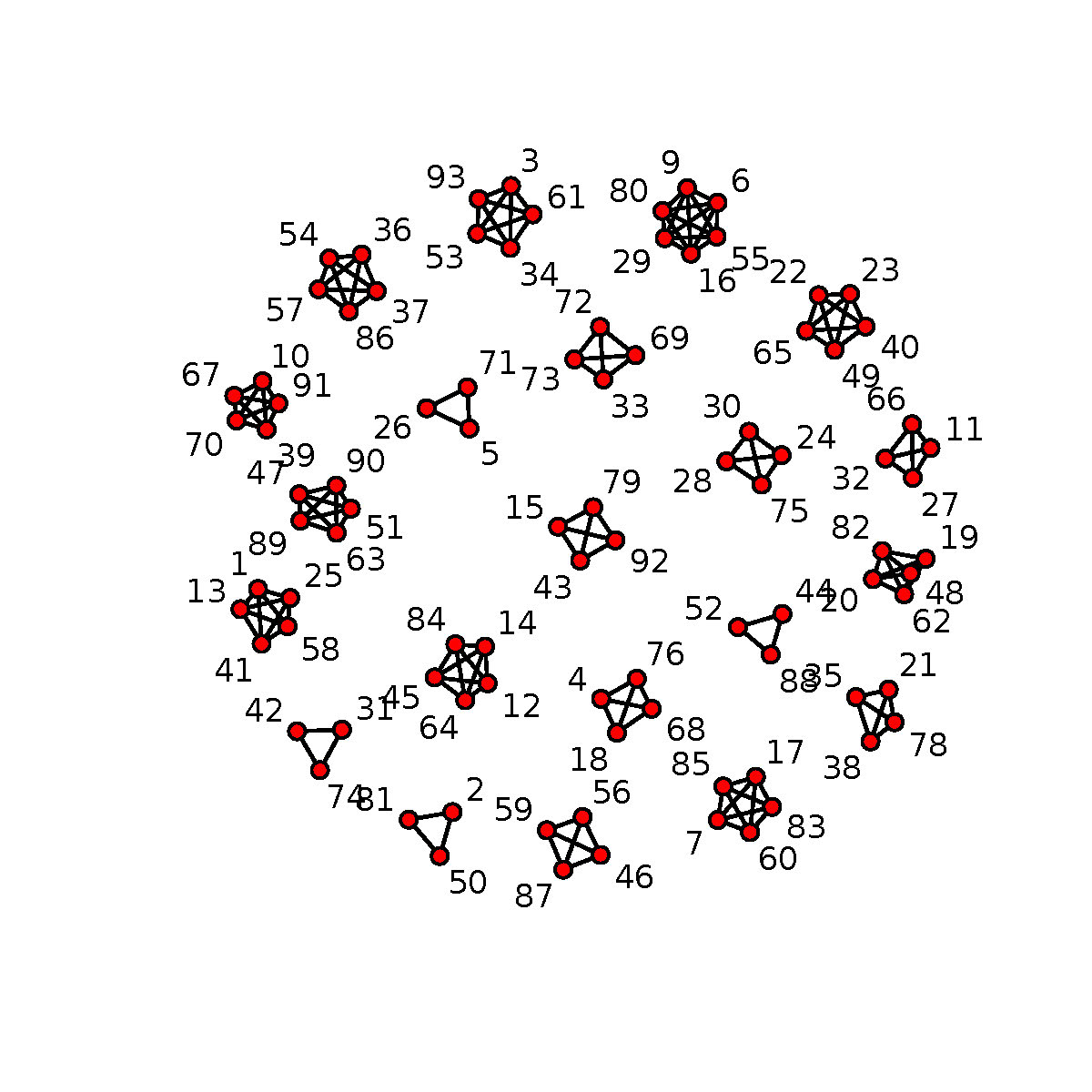} &
\includegraphics[height = 3cm]{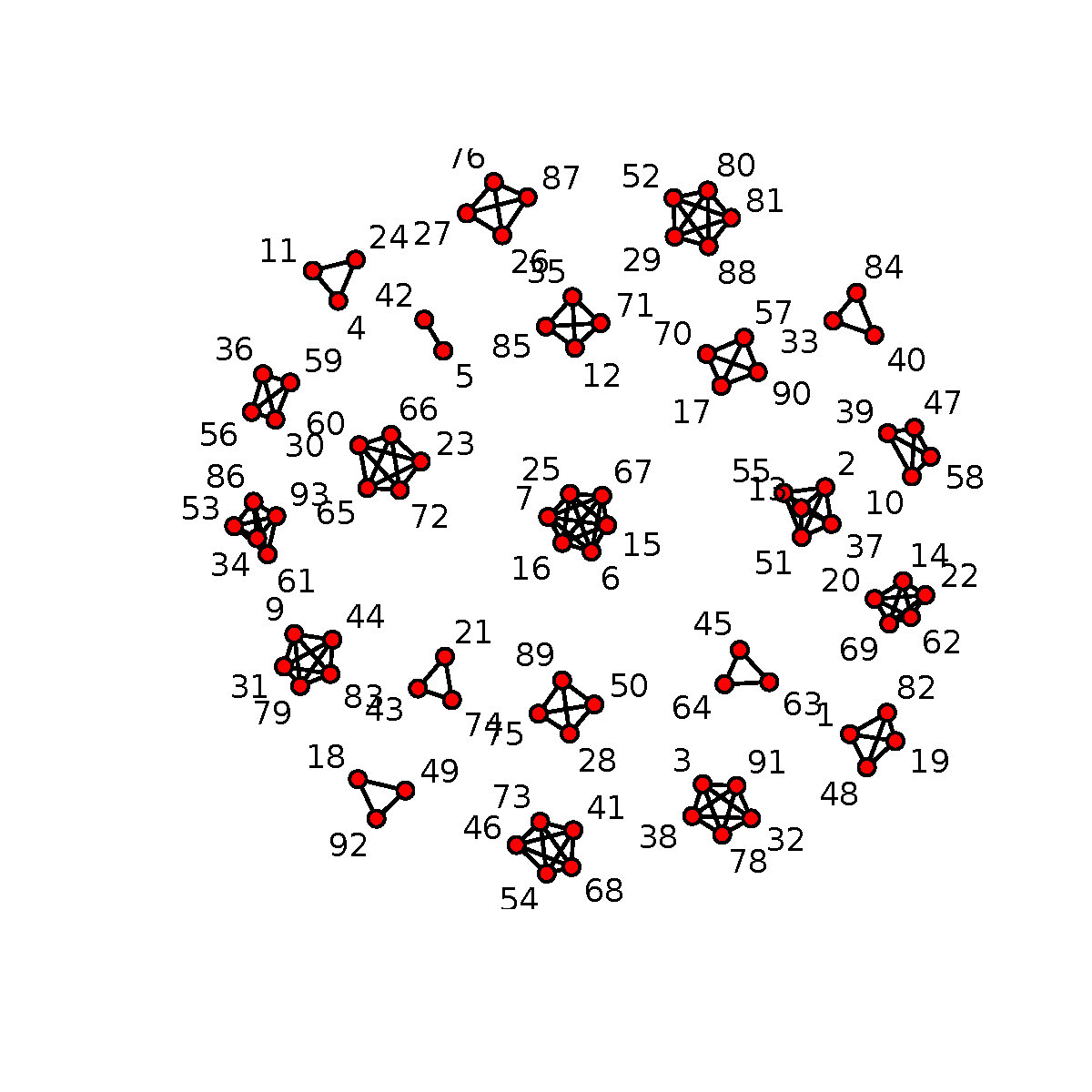}
\end{tabular}
\end{center}
\caption{The upper left graph is analogous to Figure \ref{fig:generaconvergence}.  The others correspond to adding 1's to the connectivity matrix using Bernoulli distributions with $p=.01$, top right; $p=.05$, lower left, and $p=.1$ lower right.}
\label{MCMCoutput}
\end{figure}

\section{Simulations}
\label{sec:sims-supp}

To verify the performance of the proposed method, a synthetic experiment was created with a known bacterial independence structure. This constitutes our ground truth. A number of 2,273 bacterial genomes from the National Center for Biotechnology Information (NCBI) GenBank database were obtained from the complete set of genomes.  These complete genomes are considered to be very high quality by GenBank, and are deemed to have a final DNA sequence for their respective genomic sequences (chromosomes and/or plasmids).

\begin{figure}[H]
\centering
\includegraphics[width=0.65\textwidth]{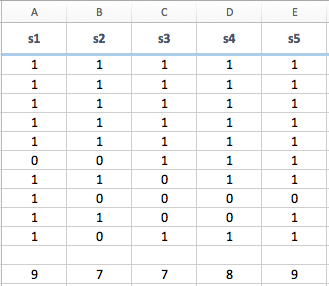}
\caption{\label{fig:known_matrix}Input matrix with a predefined independence structure among the genomes (columns) based on their sequencing reads (rows) and column totals (last row).}
\end{figure}

A binary matrix that dictates the reads (rows) that are shared by each genome (columns) was generated.  Figure \ref{fig:known_matrix} illustrates such a matrix.  In this matrix, rows represent sequencing reads, columns represent known bacterial genomes, and cells contain a 1 when a read maps to a given genome, and 0 otherwise.  The last row of the matrix contains the number of reads that match a given genome, and we use this total match count as the starting point to create a independence among a set of genomes.

A simulation script was developed in Python (version 2.7.10) to facilitate the creation of random binary matrices.  The script takes as input the total number of genomes to include in the experiment, the maximum clique size, the total number of reads, and a percentage of the genomes to leave out of the connections.

The script will then randomly select the genomes to make connections based on shared reads and construct cliques of size two to the maximum number specified in the parameters. Genomes are essentially binary vectors where each location is a read, and a 1 or 0 at that location marks a read as mapping to the genome.  Creating linked genomes is a matter of establishing binary vectors that share the same amount of reads at similar positions.  Genomes with identical sequences will have identical read catalogs represented by identical binary vectors. As output the simulation script creates the binary matrix that represents the known independence graph and is used as input to the software \--- see Section \ref{sec:software}.

\section{Wound microbiome}
\label{sec:wound-supp}
The wound microbiome experiment investigated the bacterial population involved in chronic wound healing in elderly diabetic patients.  The microbial community in the wound bed, wound edge, and peripheral healthy skin was compared at two time points.

Samples were collected from the wound bed and wound edge, as well as swab samples from the peripheral healthy skin of 10 patients.  These samples were collected twice: once for the patient's first visit before treating and redressing the wound, and in the second visit one week later. A total of 50 samples were collected. They were sequenced using the standard Illumina protocol for 16S rRNA gene sequencing, using the Illumina MiSeq instrument.  After processing by the company (\textit{Second Genome}), the investigators shared with us a table of already calculated OTU abundances.  We did not make the jump from 16S to OTUs; it was already done and we just used the data.   We have a copy of the report from the company but are not yet authorized to release it.  What we can say is that an OTU is a cluster of reads with a similar 16S-gene sequence. Usually the identity threshold is 96-97\% or higher. The basic idea is that similar bacteria will have similar 16S-gene sequences, and they can be identified by grouping them together given a threshold: 96-97\% agreement usually gives resolution at the genera level, 98-99\% usually gives approximate resolution at the species level. It is unclear if OTUs are useful at this time for finer levels such as strains. Thus, for present purposes, it is enough to observe that the 16S rRNA sequencing data was converted to operational taxonomic units (OTU) and  analyzed to identify the biologically significant OTUs. 

\subsection{OTU analysis}

The analysis goal was to identify the OTUs that are significant between patients whose wound healed (``healer'') versus those patients that did not heal (``non-healer'') in the context of the location of the wound: the wound bed, wound edge, or peripheral healthy skin.

\begin{verbatim}
#	Load data
otu <- read.csv('otuTable_counts.csv', header=T, row.names=1)
samp <- read.csv('otuTable_samples.csv', header=T, row.names=1, 
                              stringsAsFactors=FALSE)
otu.mat <- as.matrix(otu)

source("https://bioconductor.org/biocLite.R")
biocLite()
biocLite('phyloseq')
library("phyloseq")
OTU <- otu_table(otu.mat, taxa_are_rows = TRUE)


#	Phyloseq
physeq<-phyloseq(OTU)
sampledata<-sample_data(samp)
physeq1<-merge_phyloseq(physeq,sampledata)
biocLite('edgeR')


#	EdgeR Analysis
library(edgeR)
Diagnosis<-get_variable(physeq1, "type")
Location<-get_variable(physeq1, "location")
design<-model.matrix(~Diagnosis + Location)
x<-as(otu_table(physeq1), "matrix")+1L
x<-DGEList(counts=x, group=Diagnosis)


#	Calculate norm factors and estimate dispersion
x<-calcNormFactors(x, method="RLE")
x<-estimateGLMCommonDisp(x, design)


#	Model fitting
fit <- glmFit(x, design)
lrt <- glmLRT(fit)
\end{verbatim}

\subsection{OTU results}
185 OTUs were identified as being significant between ``healer'' and ``non-healer'' patients.  Table \ref{significant_otus} contains the results. Figure \ref{fig:venn_diagram} contains the intersection between the resulting lists from Table \ref{significant_otus}.

\begin{table}[H]
\centering
\caption{Significant OTUs between healing and non-healing patients.}
\label{significant_otus}
\begin{tabular}{l|c}
\multicolumn{1}{c|}{\textbf{Wound Location}} & \textbf{Number of Significant OTUs} \\ \hline
Bed                                          & 144                                 \\
Edge                                         & 100                                 \\
Peripheral Healthy                           & 75                                 
\end{tabular}
\end{table}

\begin{figure}[H]
\centering
\includegraphics[width=0.5\textwidth]{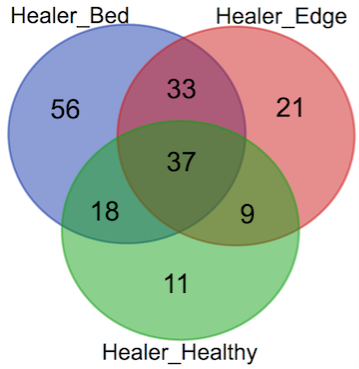}
\caption{\label{fig:venn_diagram}Intersection between the significant OTUs from the wound bed, wound edge, and peripheral healthy skin.}
\end{figure}

To examine the dependencies among the three wound locations, we ran the significant OTUs to create a independence graph using the proposed method.  At the same time, we ran a principal component analysis (PCA) on the samples, and also performed hierarchical clustering to see if we could identify groups of samples that were linked together by their respective OTUs.  Figure \ref{fig:wound_panels} contains the dendrogram plot for the hierarchical clustering (panel A), the PCA plot (panel B), and the independence graph (panel C).

\begin{figure}[H]
\centering
\includegraphics[width=0.95\textwidth]{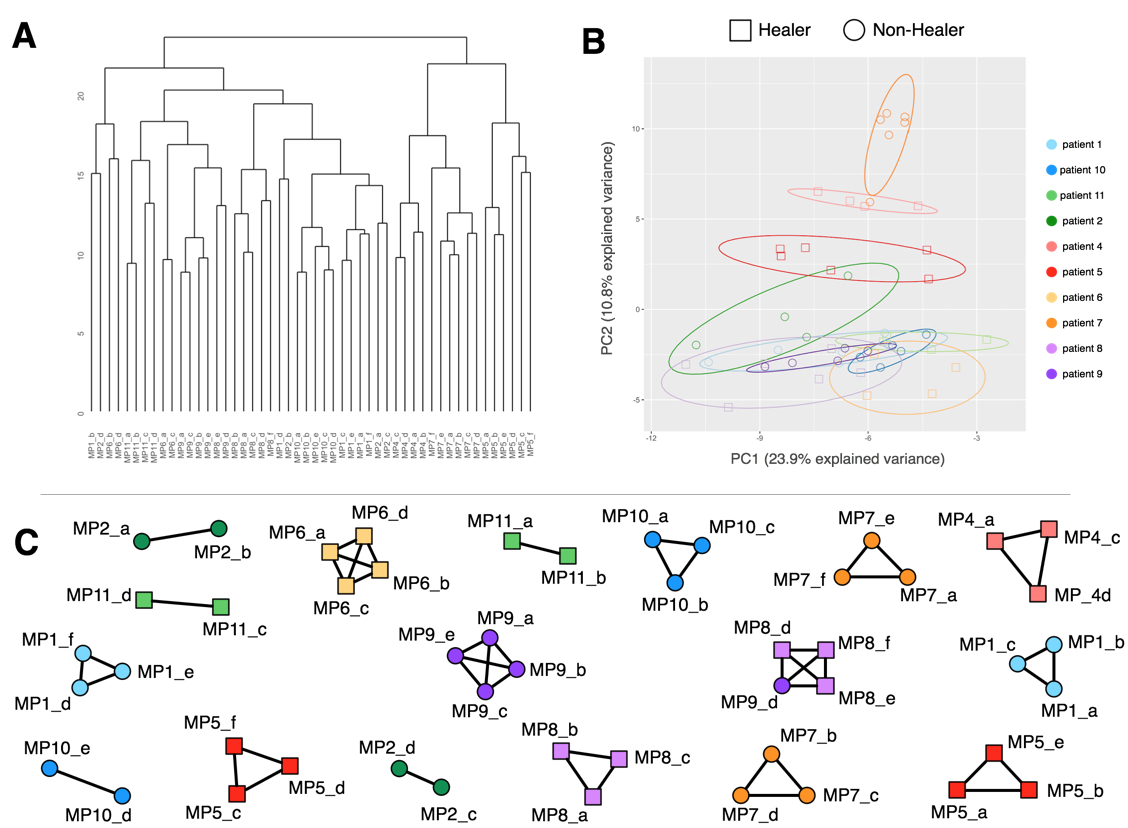}
\caption{\label{fig:wound_panels}Significant OTUs in the wound Bed location. Sample clustering (panels A and B) and independence graph (panel C).}
\end{figure}

In all figures, the patient effect is the strongest differentiating factor as samples from a given patient tend to cluster together in hierarchical clustering (panel "A") and PCA (panel "B").  The independence graph creating by the model is also in agreement with the plots as samples from a given patient, or a patient condition (healer or non-healer), are connected in a clique.

\section{Software availability}
\label{sec:software}

The software for this project is open source software, available under the GNU General Public License, Version 3.  The software is developed in R, version 3.2.3, and can be obtained at the following GitHub repository:\\\

\url{https://github.com/camilo-v/Clique_Log_Linear}


\end{document}